 \definecolor{BLACK}{gray}{0}
 \definecolor{WHITE}{gray}{1}
 \definecolor{RED}{rgb}{1,0,0}
 \definecolor{GREEN}{rgb}{0,1,0}
 \definecolor{BLUE}{rgb}{0,0,1}
 \definecolor{CYAN}{cmyk}{1,0,0,0}
 \definecolor{MAGENTA}{cmyk}{0,1,0,0}
 \definecolor{YELLOW}{cmyk}{0,0,1,0}
\theoremstyle{plain}
\newtheorem{thm}{\protect\theoremname}
  \theoremstyle{definition}
  \newtheorem{defn}[thm]{\protect\definitionname}
  \theoremstyle{plain}
  \theoremstyle{plain}
  \theoremstyle{plain}
  \newtheorem{lem}[thm]{\protect\lemmaname}
\def\qed{\leavevmode\unskip\penalty9999 \hbox{}\nobreak\hfill
     \quad\hbox{\leavevmode  \hbox to.675em{%
               \hfil\vrule   \vbox to.675em%
               {\hrule width.6em\vfil\hrule}\vrule\hfil}}
     \par\vskip3pt}
  \providecommand{\corollaryname}{Corollary}
  \providecommand{\definitionname}{Definition}
  \providecommand{\factname}{Fact}
  \providecommand{\lemmaname}{Lemma}
\providecommand{\theoremname}{Theorem}
\newcommand{\cH}{\mathcal{H}}
\newcommand{\cN}{\mathcal{N}}
\newcommand{\cD}{\mathcal{D}}
\newcommand{\ket}[1]{ | #1 \rangle}
\newcommand{\be}{\begin{equation}}
\newcommand{\ee}{\end{equation}}
\newcommand{\bq}{\begin{eqnarray}}
\newcommand{\eq}{\end{eqnarray}}
\begin{document}

\title{Topological phase diagram of $\cD(S_3)$ induced by forbidding charges and fluxes}

\author{Anna K\'om\'ar}
\affiliation{Institute for Quantum Information and Matter and Walter Burke Institute for Theoretical Physics, California Institute of Technology, Pasadena, California 91125, USA}
\author{Olivier Landon-Cardinal}
\affiliation{Department of Physics, McGill University, Montr\'eal, Qu\'ebec, Canada H3A 2T8}

\begin{abstract}
We analyze phase transitions induced by forbidding charges and fluxes in $\cD(S_3)$, the simplest non-Abelian model among quantum doubles, a class of 2D spin lattice topological models introduced by Kitaev. Contrary to a topological quantum field theory, the lattice degrees of freedom allow to forbid charges and fluxes independently, resulting in a non-trivial effect on dyons. Forbidding charges and fluxes leads to only a subset of the original anyons remaining, forming a new theory. We interpret the processes the theory undergoes in terms of condensation, spontaneous symmetry breaking, splitting of particles. Mapping the complete phase diagram of $\cD(S_3)$, we find two distinct groups of phases: quantum doubles of subgroups of $S_3$, and a non-trivial emergent chiral phase, $SU(2)_4$.
\end{abstract}
\maketitle

\section{Introduction}

Understanding phase transitions and critical phenomena of physical systems is an important step in understanding properties of physical materials and fundamental physical processes. Classical phase transitions, e.g. magnetic transitions of the Ising model are well-understood, and are described by Landau's theory, using a free-energy description of the processes \cite{Sachdev11}. Therein the free energy, a local function of the state of the system, undergoes a change where it encounters a discontinuity (first-order phase transition), or a discontinuity of a higher-order $n$th derivative ($n$th-order phase transition).

Quantum phase transitions occur as a result of a change in the Hamiltonian of a quantum system, when slowly changing from one Hamiltonian to another the system passes through a critical point of non-analiticity. This could happen at a point where the ground state and the first excited state produce a level-crossing (the gap closes), or even when the gap becomes small enough that it would close \emph{in a limit}. The classic description of Landau's theory can be extended to provide a faithful description of these processes as well, through a quantum-classical mapping \cite{Sachdev11}.

An important concept of Landau's theory is the existence of a \emph{local order parameter}, that characterizes the state of the system. This is the main parameter of the free energy function, and the form of the function is inferred using existing symmetries of the system. However, introduce topological theories, and Landau's theory breaks down. Topological order, an inherently non-local property, can't be captured through a local order parameter. As a result, changes in the topological order of a field theory can't be described using this classic framework. An accurate description of topological phase transitions would help us understand more about the nature of topological order.

One well-known theory exhibiting topological order is the quantum double construction, introduced by Kitaev in Ref.~\cite{Kitaev03}. This is a family of 2D spin theories on a lattice, with a Hamiltonian whose ground state exhibits topological order. This means that there exists no local parameter that characterizes the ground state, and as such, quantum doubles are prime candidates for storing quantum information in: information stored in their degenerate ground space won't decohere due to local noise from the environment. The specific quantum double, and thus the specific topological order such a theory realizes depends on the group, $G$ that is used to build the quantum double. As a result, quantum doubles built around different groups would have different topological properties.

In this work, we analyze phase transitions of quantum doubles, induced by changing the set of allowed excitations (anyons) in the theory. A correspondence between the set of anyons in a field theory, and the topological order of the ground state has been established in Ref.~\cite{LWW15}. Therefore, anyons of a theory may be treated as a signature of topological order, a concept we will revisit in the Discussion section of this paper.

We focus our analysis on the simplest non-Abelian quantum double, $\cD(S_3)$, and investigate what processes this double undergoes when we remove (or forbid) certain anyons from the theory, i.e. we don't allow their creation either through fusion of other particles, or through thermal processes from the vacuum. More precisely, we forbid conjugacy classes and irreducible representations of the group, which leads to forbidding anyons (or parts of their internal Hilbert spaces). We interpret the mathematical steps of our calculations as physical processes: condensation, (spontaneous) symmetry breaking, and splitting of particles.

We make an exhaustive map of the phase diagram of $\cD(S_3)$ (for an illustration, see Fig.~\ref{fig:phase_diagram_small}). Among the new phases we find quantum doubles of subgroups of $S_3$, as well as the theory $SU(2)_4$. While the emergence of theories based on subgroups is to be expected (as we decrease the set of allowed anyons), the fact that the double $\cD(S_3)$ can transition into $SU(2)_4$ is intriguing. It is especially interesting in light of a recent proposal \cite{LBF+15} to utilize the anyons of $SU(2)_4$ to build a universal gate-set for quantum computation.

\begin{figure}
\begin{centering}
\includegraphics[width=0.4\textwidth]{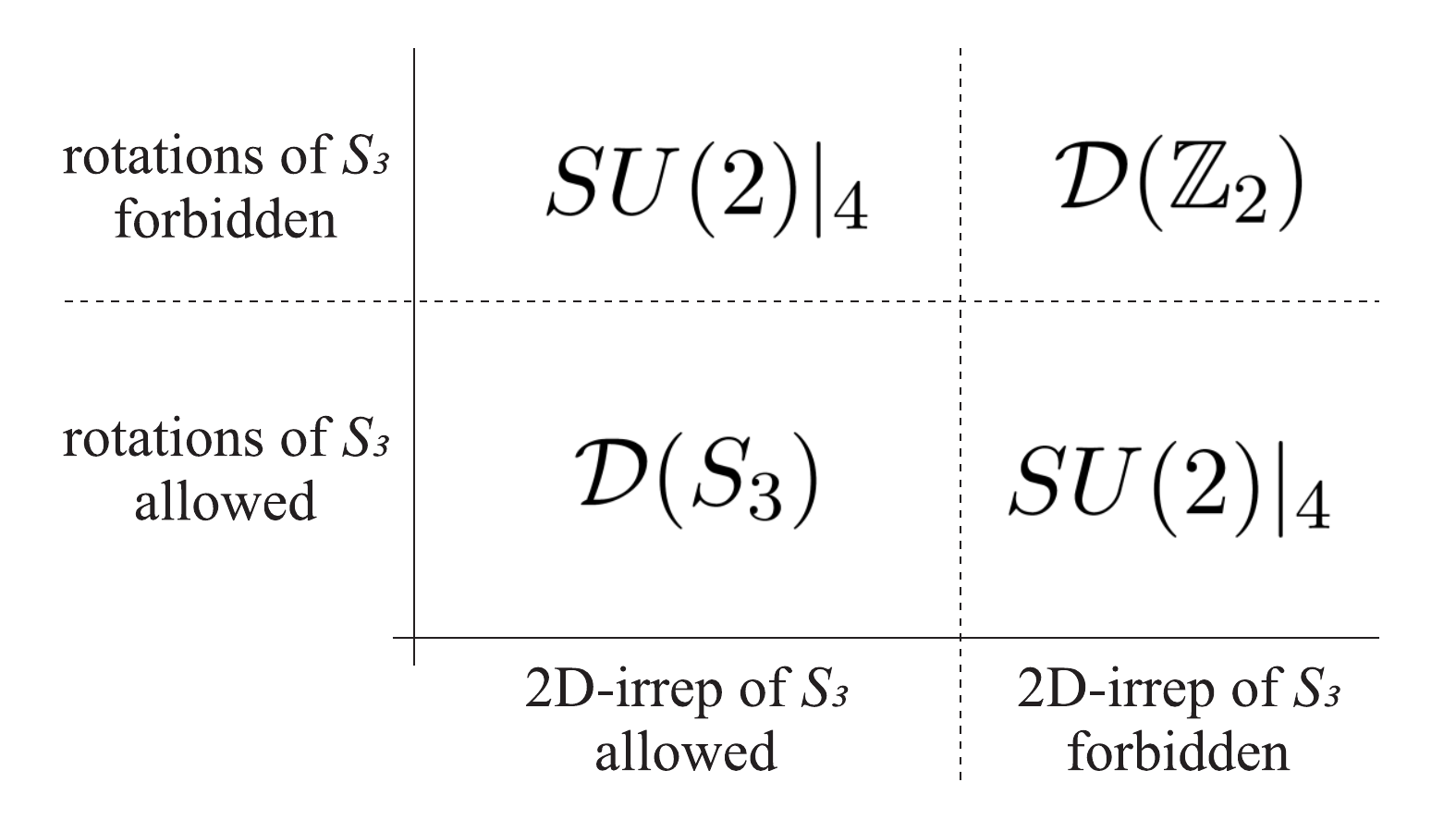}
\par\end{centering}
\caption[Part of the phase diagram of $\cD(S_3)$]{Part of the phase diagram of $\cD(S_3)$, showing the emergent theories when forbidding the 2d-irrep of $S_3$ (charge label) and/or the rotations of $S_3$ (flux label).}
\label{fig:phase_diagram_small}
\end{figure}

This paper is organized as follows. In Sec.~\ref{sec:framework} we provide a general review of anyons and anyonic data (Secs.~\ref{sec:anyons}-\ref{subsec:anyonic_data}) and introduce the Abelian and non-Abelian theories that appear in this paper (Sec.~\ref{sec:Drinfeld_double}). Then, in Sec.~\ref{sec:phase_transitions_general} we will give an overview of the physical processes a quantum double might undergo during a phase transition; this is in fact a summary of all processes found in the later sections, with their physical interpretations. In this section, we also give a detailed description of our mathematical protocol, used to obtain the results of this paper. In Sec.~\ref{sec:phase_diagram}, we demonstrate this protocol on three cases of phase transitions: one transition yields $\cD(\mathbb{Z}_3)$, while the other two result in the chiral model $SU(2)_4$. Until this point our analysis focuses only on the anyons, and the field theory aspects of quantum doubles. In Sec.~\ref{sec:lattice_picture} we investigate possible ways to realize these phase transitions, using the standard lattice description. We argue that while strictly speaking a global projection might be necessary to induce a phase transition, this can be approximated by adding local terms to the topological Hamiltonian of the model. Sec.~\ref{sec:conclusions} is a summary of our results and further directions. Finally, the appendices contain additional mathematical properties of the doubles discussed in this paper (Appendix~\ref{sec:fusion_Smatrix_DS3}), and the complete analysis of all phases of $\cD(S_3)$ that are not presented in the main text (Appendix~\ref{sec:all_theories}).

\section{Anyons in field theories} \label{sec:framework}

We start by introducing the concept of anyons, and that of non-Abelian anyons (Sec.~\ref{sec:anyons}). Anyons can emerge as excitations of two-dimensional field theories, their exchange statistics differ from the trivial statistics of bosons and fermions. In Sec.~\ref{subsec:anyonic_data} we will present the mathematical framework for the description of anyons, including the concept of the braiding $S$-matrices and fusion rules. Then, in Sec.~\ref{sec:Drinfeld_double} we introduce a few theories with anyonic excitations. We provide four examples: Abelian theories $\mathcal{D}(\mathbb{Z}_2)$, widely known as the toric code, and $\mathcal{D}(\mathbb{Z}_3)$, as well as the non-Abelian theories $SU(2)_4$ and $\mathcal{D}(S_3)$, the latter being the central object of this paper. 

\subsection{Anyons and labeling}
\label{sec:anyons}

In two-dimensional field theories, excitations can have esoteric exchange statistics, unlike those of bosons and fermions. Denoting the (counterclockwise) exchange operation of particles $a$ and $b$ by $R_{ab}$, doubly exchanging said particles of a two-dimensional theory in general can yield
\be
R_{ab}^2 \ket{a,b} = U_{a,b} \ket{a,b} ,
\ee
where $U_{a,b}$ is a unitary transformation acting on the joint wave function. If the effect of $U_{a,b}$ is a simple (non-trivial) phase: $\exp (i \varphi)$, then we call $a$ and $b$ \emph{Abelian anyons}, otherwise they are \emph{non-Abelian}.

Following Ref.~\cite{Preskill98}, we can introduce flux- and charge-labels for the anyons. The flux labels are elements of a group $g \in G$, and the charge labels are irreducible representations, or \emph{irreps} of the same group $G$. These two labellings, in fact, provide two complementary resolutions of the Hilbert space of an anyon:
\bq
\mathcal{H} &=& \textrm{span}\left\{ \ket{g} \; | \; g \in G \right\} \\
\mathcal{H} &=& \textrm{span}\left\{ \ket{\Gamma,i} \; | \; \Gamma \textrm{ irrep of } G, \; i \in \{1, ... |\Gamma|\} \right\} .
\eq

Anyons that only have a non-trivial flux, or a non-trivial charge label, are called \emph{fluxons} or \emph{chargeons}, while their other label is trivial. Anyons with both a non-trivial flux and charge label are called \emph{dyons}.

Due to the transformations fluxons undergo when braided with each other, it is not possible to use a group element $g$ as a globally agreed upon label for a fluxon \cite{Preskill98}. Rather, the gauge-invariant flux labels are conjugacy classes $C_g$ of $G$:
\begin{defn}[Conjugacy class]
\be
C_g = \{ zgz^{-1} | g,z \in G \} .
\ee
\end{defn}

We can find this flux label of an anyon by braiding the unknown anyon with known chargeons (labeled by irreps of $G$), i.e. transporting known chargeons around the unknown fluxon (see Fig.~\ref{fig:double-exchange}). We can similarly find the charge label of an anyon by braiding it with known fluxons. However, when we try to find the charge label of a dyon this way, we don't have access to the specific label $\Gamma^G$ (an irrep of $G$) \cite{Preskill98}. Rather, the gauge-invariant charge label of a dyon that has flux $g$: is an irrep of the normalizer group of $g$.

\begin{figure}
\begin{centering}
\includegraphics[width=0.3\textwidth]{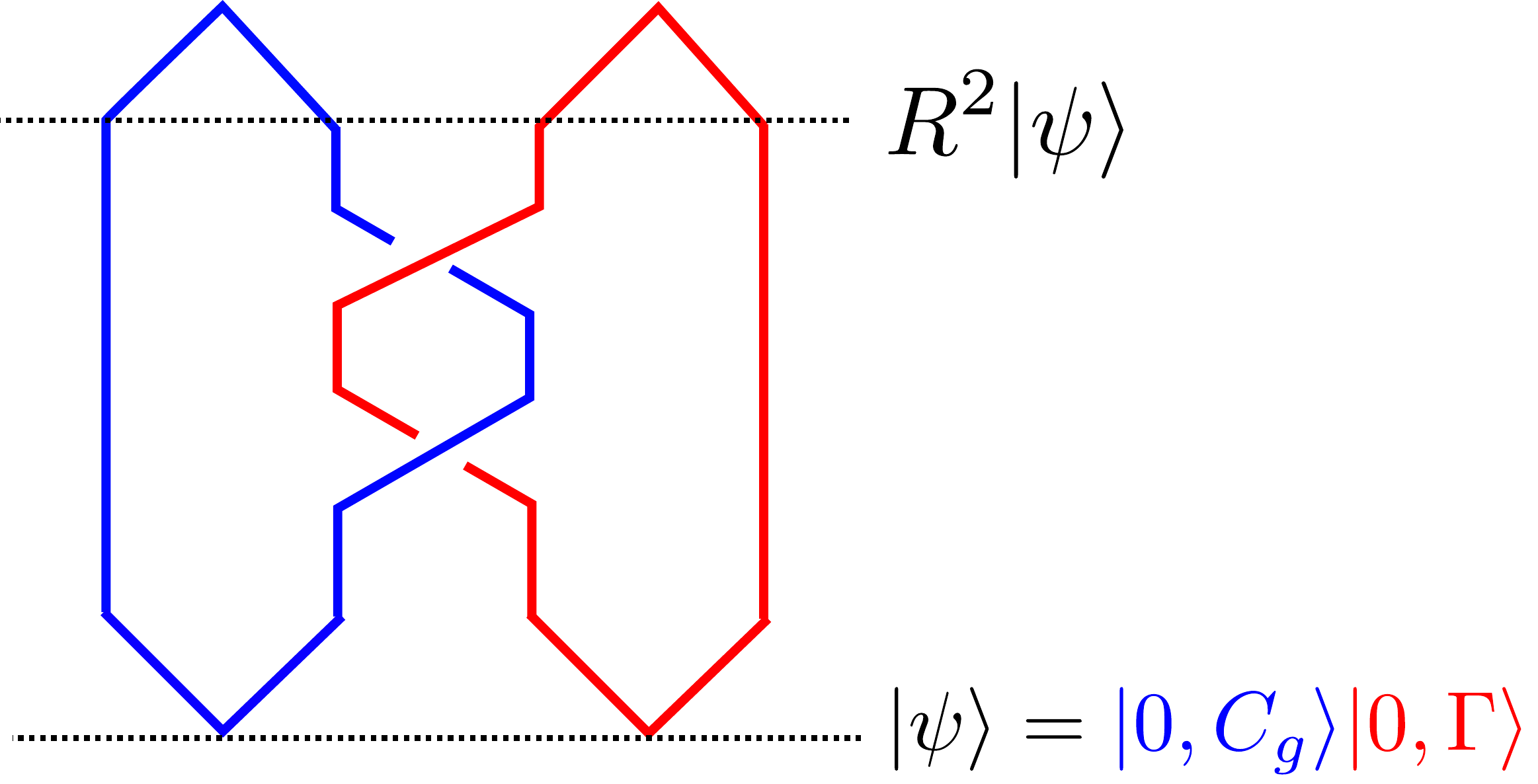}
\par\end{centering}
\caption{Braiding fluxon $C_g$ with chargeon $\Gamma$.}
\label{fig:double-exchange}
\end{figure}

\begin{defn}[Normalizer]
\be
\cN_h = \{ z \in G | zh = hz \} .
\ee
\end{defn}

The normalizer of an element of $G$ is always a subgroup of $G$. Furthermore, even though the exact normalizer group will depend on the specific choice of element $z$, the normalizer groups $\cN_z$ are isomorphic to each other for all $z \in C_g$.

\subsection{Anyonic data}
\label{subsec:anyonic_data}

Having reviewed how to label anyons, we would like to introduce the mathematical objects which encode the information about the anyon model, in particular the $S$-matrix, the fusion rules and the quantum dimensions.

\subsubsection{S-matrix}

The $S$-matrix encapsulates the braiding relations of the different anyons. Formally, the $S$-matrix elements are the amplitude probabilities of braiding events described in Fig.~\ref{fig:S-matrix}.

\begin{figure}
\begin{centering}
\includegraphics[width=0.4\columnwidth]{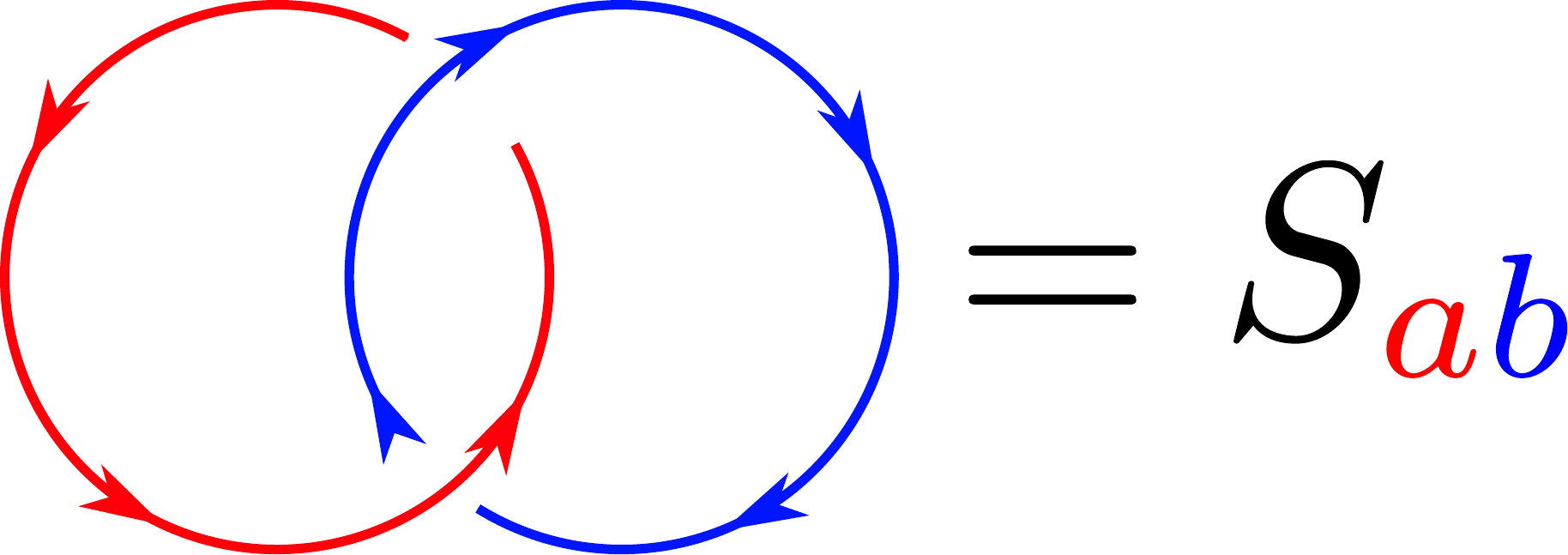}
\par\end{centering}
\caption[Definition of an $S$-matrix element]{Definition of the $S$-matrix element $S_{ab}$ where $a$ and $b$ label different anyon types. $S_{ab}$ is the probability amplitude associated to the braiding described by the figure.}
\label{fig:S-matrix}
\end{figure}

The $S$-matrix is unitary and for the models in the focus of this paper (quantum doubles) it can be computed from the representation theory of the underlying group thanks to the formula~\cite{BSW11,Coquereaux12}
\begin{eqnarray} \label{eq:Smatrix_doubles}
S_{(C_g,\Gamma),(C_{g'},\Gamma^{'})}  = &&  \tfrac{1}{\left|\mathcal{N}(g)\right|\left|\mathcal{N}(g')\right|} \\
&& \sum_{h:hg'h^{-1}\in\mathcal{N}(g)} \chi_{\Gamma}(hg'h^{-1}) \chi_{\Gamma'}(h^{-1}g^{-1}h), \nonumber
\end{eqnarray}
where $\chi_{\Gamma}(z) = \mbox{Tr} (\Gamma (z))$ is the character of representation $\Gamma$.

For completeness, let us state that the $S$-matrix is not sufficient to uniquely identify an anyon model. For instance, the quantum double of the dihedral group of degree 4, $D_4$ and the quantum double of the quaternion group $Q$ have the same $S$-matrices yet different anyon models. The $S$-matrix of $\mathcal{D}(D_4)$ is worked out in~\cite{FML+16} while the $S$-matrix of $\mathcal{D}(Q)$ is worked out in~\cite{XL12}. One also needs to provide the $T$-matrix which encapsulates the topological spin of each anyon. For quantum doubles, i.e. most theories appearing in this paper, the $T$-matrix can be computed thanks to 
\begin{equation}
T_{(C_g,\Gamma),(C_{g'},\Gamma^{'})}=\delta_{C_{g}C_{g'}}\delta_{\Gamma\Gamma'}\frac{\chi_{\Gamma}(g)}{\chi_{\Gamma}(e)} ,
\end{equation}
where $e$ is the identity element of group $G$.

\subsubsection{Fusion rules}

In addition to braiding, fusing two anyons is an essential feature of an anyon model. Suppose two anyons are next to one another. One would like to treat them as a single anyon, and identify the properties of this resulting anyon. The resulting anyon is the \emph{fusion} of the two original anyons. In the absence of information about the particular state of the two original anyons, the best description of the fusion state is summarized by listing the allowed fusion channels. This information is captured in the $N$-symbols:
\be \label{eq:N_symbol_definition}
a \times b = \sum_c N_{ab}^c c ,
\ee
where $N_{ab}^c$ is the multiplicity of particle $c$ when fusing anyon $a$ with anyon $b$. In a non-Abelian theory, at least one fusion has two (or more) possible fusion channels, making the result of that fusion non-deterministic. 

An important feature of the fusion rules is that for any anyon type $a$ there exists a unique anyon type $a^{-1}$ such that the vacuum is a possible fusion outcome. The corresponding anyon is called the antiparticle of $a$. 

Having defined an antiparticle, one can capture all the information about the braiding of anyons into the topological $S$-matrix whose elements are the amplitudes of the different braiding processes:
\be
S_{ab} = \text{Tr} (R_{b^{-1}a} R_{a b^{-1}}) / \mathcal{D} ,
\ee
where $b^{-1}$ is the antiparticle of anyon $b$. Eq.~\eqref{eq:Smatrix_doubles} in the previous section was a special case of this formula.

Fusing and braiding are not independent but related by the Verlinde-formula:
\be \label{eq:verlinde}
N_{ab}^c = \sum_z \frac{S_{az} S_{bz} S_{c^{-1}z}}{S_{1z}} .
\ee
where $1$ is the trivial anyon (vacuum). This relates the $S$-matrix to the $N$-symbols.

\subsubsection{Quantum dimensions}

The quantum dimension $d_a$ of anyon $a$ is the dimension of the space of anyon $a$, and is related to the number of internal degrees of freedom the anyon has. The fact that an anyon might have more than one internal degree of freedom, we will refer to as that anyon having different \emph{flavors}. Abelian anyons necessarily have $d_a = 1$, while non-Abelian anyons have quantum dimension $d_a > 1$.

The definition of quantum dimensions is through the fusion rules: the fusion space of a pair of anyons needs to be equal to the combined space of the outcome-anyons, taking their multiplicity into account. I.e. a fusion of anyons $a$ and $b$ in the form of Eq.~\eqref{eq:N_symbol_definition} means the quantum dimensions of participating anyons are \cite{Pachos12}:
\be
d_a \cdot d_b = \sum_c N_{ab}^c d_c .
\ee
Quantum dimensions need not be integers.

The total quantum dimension of a topological theory is then the square-sum of the quantum dimensions of all anyon species in the theory:
\be
\mathcal{D}^2 = \sum_{\textrm{anyons } k} d_k^2 .
\ee

\subsection{Theories with anyonic excitations}
\label{sec:Drinfeld_double}

In this section we will review the theories that appear in this paper. We discuss two Abelian theories: the doubles $\mathcal{D}(\mathbb{Z}_2)$ and $\mathcal{D}(\mathbb{Z}_3)$, as well as two non-Abelian theories: the Chern-Simons theory $SU(2)_4$, and the simplest non-Abelian double $\cD(S_3)$, which is in the focus of this paper.

\subsubsection{$\mathcal{D}(\mathbb{Z}_2)$, the simplest Drinfeld double}
\label{subsubsec:toric_code}

One algebraic model that produces anyons is the Drinfeld double. The Drinfeld double of group $G$, $\mathcal{D}(G)$ is a quasi-triangular Hopf-algebra \cite{Drinfeld86}, with elements in the form of pairs of $g,h \in G$: $(g,h)$, hence the name "double" for this theory. The irreducible representations of this algebra describe anyons, Abelian ones if the group $G$ is Abelian, and non-Abelian anyons if $G$ is non-Abelian.

Anyons in Drinfeld doubles have \emph{flux} and \emph{charge} labels, just as the anyons of Sec.~\ref{sec:anyons}. The flux labels are conjugacy classes of $G$. Charge labels of anyons are either irreps of $G$ (for chargeons), or irreps of normalizer subgroups of $G$ (for dyons).

The simplest Drinfeld double is the toric code, based on $G=\mathbb{Z}_2$: $\mathcal{D}(\mathbb{Z}_2)$; it has Abelian anyons. Let us look at this example.

The group $\mathbb{Z}_2$ has two elements: $\{ e,x \}$, where $e$ is the identity, and $x \cdot x = e$.

Then, the elements of the algebra $\mathcal{D}(\mathbb{Z}_2)$ are:
\bq
&& (e,e) \\
&& (e,x) \\
&& (x,e) \\
&& (x,x) .
\eq

Meanwhile, anyons of this theory are labeled by conjugacy classes and irreps. The conjugacy classes of $\mathbb{Z}_2$ are simply its elements:
\bq
C_e &=& e \\
C_x &=& x ,
\eq
and these are the flux labels of the theory. The normalizers of both of these conjugacy classes/elements are the full group, $\mathbb{Z}_2$. Thus, the irreps labeling the charges are irreps of $\mathbb{Z}_2$: $\Gamma_1^{\mathbb{Z}_2}$ and $\Gamma_{-1}^{\mathbb{Z}_2}$, regardless of the flux content of the anyon. (See the appropriate subtable in Table~\ref{tab:irreps_of_Z3_Z2} for these irreps.)

The traditional electric and magnetic excitations of the toric code are shown in Fig.~\ref{fig:anyon-splitting-diagram_Z2}. These anyons are the simple juxtapositions of a flux and a charge label. The chargeon $\mathbf{e}$ and the fluxon $\mathbf{m}$ inherit their fusion rules from $\mathbb{Z}_2$:
\bq
\mathbf{e} \times \mathbf{e} &=& 1 \\
\mathbf{m} \times \mathbf{m} &=& 1 ,
\eq
and
\be
\mathbf{e} \times \mathbf{m} = \mathbf{em} .
\ee
Similarly, $\mathbf{em} \times \mathbf{e} = \mathbf{m}$, $\mathbf{em} \times \mathbf{m} = \mathbf{e}$, and $\mathbf{em} \times \mathbf{em} = 1$.

\begin{figure}
\begin{centering}
\includegraphics[width=0.15\textwidth]{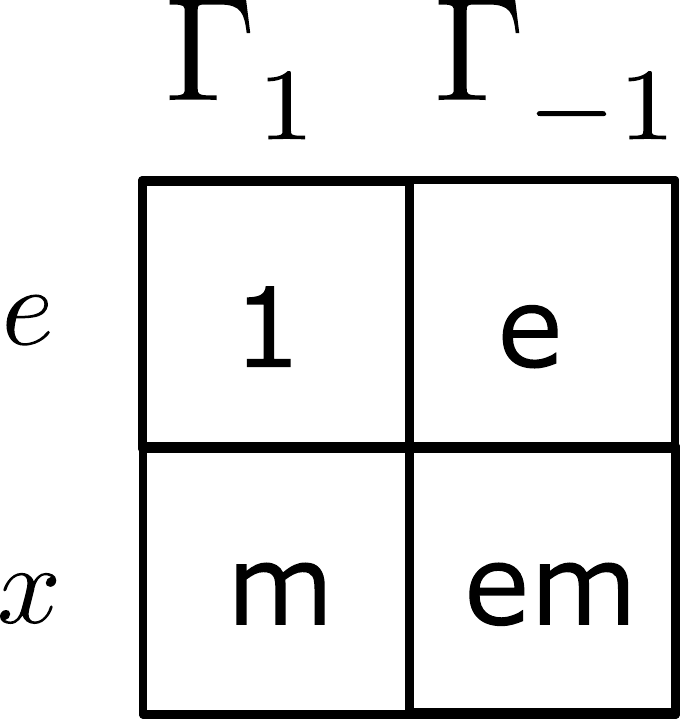}
\caption{Anyons of $\mathcal{D}(\mathbb{Z}_2)$ in relation to their labels.}
\label{fig:anyon-splitting-diagram_Z2}
\end{centering}
\end{figure}

\subsubsection{$\mathcal{D}(\mathbb{Z}_3)$}
\label{subsubsec:D(Z3)}

Another theory with Abelian anyons is $\cD(\mathbb{Z}_3)$, the double of the cyclic group of order $3$. This theory plays a role in Sec.~\ref{sec:phase_diagram}.

The group $\mathbb{Z}_3$ has three elements: $\{e, y, y^2\}$, with $y^3 = e$. The double $\cD(\mathbb{Z}_3)$ then has $9$ distinct elements of its algebra, in the form of $\{ (g,h) | g,h \in \mathbb{Z}_3 \}$.

The flux part of anyons of the model are labeled by conjugacy classes of $\mathbb{Z}_3$:
\bq
C_e &=& e \;\; (\textrm{trivial flux}) \\
C_y &=& y \;\; (m_1) \\
C_{y^2} &=& y^2 \;\; (m_2)
\eq
each corresponding to a flux label from the set $\{ \textrm{triv},m_1,m_2 \}$.

As the normalizer subgroups of each of these conjugacy classes are $\mathbb{Z}_3$ itself, the charge content of anyons are labeled by irreps of $\mathbb{Z}_3$: $\Gamma_1 \;\; (\textrm{trivial flux})$, $\Gamma_{\omega} \;\; (e_1)$ and $\Gamma_{\bar{\omega}} \;\; (e_2)$ (see the appropriate subtable in Table~\ref{tab:irreps_of_Z3_Z2} for these irreps, $\omega$ and $\bar{\omega} \equiv \omega^2$ are the third complex roots of unity). Each of these introduce a charge label, forming the set of charge labels: $\{ \textrm{triv},e_1,e_2 \}$.

All 9 anyons of the theory are the simple combinations of a flux and charge label from these sets: there's the vacuum ($1$, it has a trivial flux and trivial charge label), there are two distinct chargeons ($e_1$ and $e_2$), two distinct fluxons ($m_1$ and $m_2$), and four dyons ($e_1 m_1$, $e_1 m_2$, $e_2 m_1$, $e_2 m_2$).

All anyons of $\mathcal{D}(\mathbb{Z}_3)$ are Abelian, and their fusion rules can be inferred from the fusions of pure chargeons and fluxons. For the chargeons:
\bq
e_1 \times e_1 &=& e_2 \\
e_1 \times e_2 &=& 1 \\
e_2 \times e_2 &=& e_1
\eq
and the rules are identical for the fluxons. For additional details of this model, please refer to Ref.~\cite{Bonderson07}. The $S$-matrix is provided in Appendix~\ref{sec:fusion_Smatrix_DS3} of the current paper.

\subsubsection{The theory $SU(2)_4$}
\label{subsubsec:SU(2)_4}

One theory with non-Abelian excitations is $SU(2)_4$, which will play a role in Sec.~\ref{sec:phase_diagram}.

An $SU(2)_k$ theory is a "q-deformed" version of the $SU(2)$ theory, with $q = \exp(2 \pi i /(k+2))$ ($k$ is an integer) \cite{Bonderson07}. It is also a Chern-Simons field theory with the choice of gauge group $G = SU(2)$ and with a coupling constant $k$ \cite{NSS+08}. These are the archetypical topological quantum field theories, relevant to the fractional quantum Hall states with filling fraction $1/k$ (when $k$ is odd).

The theory $SU(2)_4$ is then an $SU(2)_k$ Chern-Simons theory with coupling $k=4$. Unlike other theories reviewed in thie section, $SU(2)_4$ is chiral. We can gain intuition about its anyons by considering the group $SU(2)$, truncated at spin $k/2 = 2$. Thus the set of spins (and anyons) of the theory:
\be
J_0, \; J_{1/2}, \; J_1, \; J_{3/2}, \; J_2 .
\ee

Some of their fusion rules can be inferred from fusing spins in $SU(2)$, for example:
\bq
J_0 \times J_0 &=& J_0 \\
J_{1/2} \times J_{1/2} &=& J_0 + J_1 \\
J_1 \times J_1 &=& J_0 + J_1 + J_2
\eq
The complete list of fusion rules is shown in Table~\ref{tab:fusion_SU(2)_4}. For additional details of this model, see Refs.~\cite{Bonderson07,NSS+08}.

\begin{table}
	\centering
		\begin{tabular}{ |c||c|c|c|c|c| } 
			\hline
			 & $J_0$ & $J_{1/2}$ & $J_1$ & $J_{3/2}$ & $J_2$ \\
			\hline
			\hline
			$J_0$ & $J_0$ & $J_{1/2}$ & $J_1$ & $J_{3/2}$ & $J_2$ \\ 
			\hline
			$J_{1/2}$ & $J_{1/2}$ & $J_0 \oplus J_1$ & $J_{1/2} \oplus J_{3/2}$ & $J_1 \oplus J_2$ & $J_{3/2}$ \\ 
			\hline
			$J_1$ & $J_1$ & $J_{1/2} \oplus J_{3/2}$ & $J_0 \oplus J_1 \oplus J_2$ & $J_{1/2} \oplus J_{3/2}$ & $J_1$ \\
			\hline
			$J_{3/2}$ & $J_{3/2}$ & $J_1 \oplus J_2$ & $J_{1/2} \oplus J_{3/2}$ & $J_0 \oplus J_1$ & $J_{1/2}$ \\ 
			\hline
			$J_2$ & $J_2$ & $J_{3/2}$ & $J_1$ & $J_{1/2}$ & $J_0$ \\
			\hline
		\end{tabular}
		\caption{Fusion rules of anyons of $SU(2)_4$.}
		\label{tab:fusion_SU(2)_4}
\end{table}

\subsubsection{$\mathcal{D}(S_3)$}

Another theory exhibiting non-Abelian anyons is $\mathcal{D}(S_3)$, the simplest non-Abelian double. It is based on the smallest non-Abelian group, the symmetry group of order 3: $S_3$. This group is isomorphic to the symmetry transformations of an equilateral triangle. Its elements are easily enumerated by thinking about the symmetries of this triangle (see Fig.~\ref{fig:triangle_symmetries}):
\bq
e	\;\; && \textrm{identity} \\
\{ x, xy, xy^2 \} \;\; && \textrm{mirrorings to 3 axes} \\
\{ y, y^2 \} \;\; && \textrm{rotations by $\pi/3$ and $2 \pi/3$} 
\eq

\begin{figure}
\begin{centering}
\includegraphics[width=0.2\textwidth]{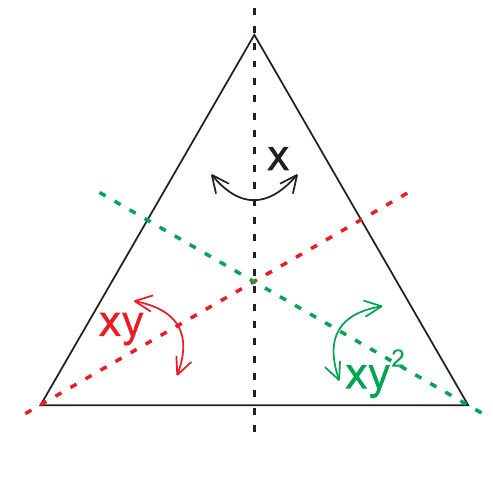} \hspace{1cm}
\includegraphics[width=0.2\textwidth]{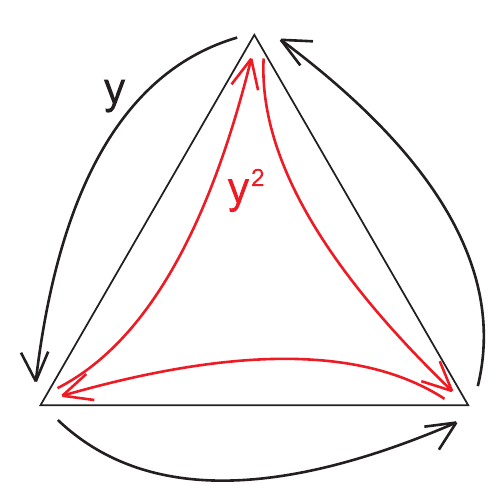}
\caption[Elements of the group $S_3$ as symmetry transformations of an equilateral triangle]{Elements of the group $S_3$ as symmetry transformations of an equilateral triangle: (a) mirrorings $x$, $xy$, $xy^2$, (b) rotations $y$, $y^2$, and the identity transformation (not shown).}
\label{fig:triangle_symmetries}
\end{centering}
\end{figure}

Considering the double $\mathcal{D}(S_3)$, one can then dutifully enumerate all the elements of the algebra: they will be in the form of $\{ (g,h) | g,h \in S_3 \}$, and there will be a total of $36$ of them. However, the anyons of this model, similarly to those of $\mathcal{D}(\mathbb{Z}_2)$ and $\mathcal{D}(\mathbb{Z}_3)$ will be labeled by conjugacy classes and irreps of normalizer subgroups of $S_3$.

Fluxes of the double are labeled by the conjugacy classes of $S_3$:
\bq \label{eq:conj_classes_1}
C_e &=& \{ e \} \\
C_x &=& \{ x,xy,xy^2 \} \\
\label{eq:conj_classes_3}
C_y &=& \{ y,y^2 \}
\eq

The charge labels of $\mathcal{D}(S_3)$ will be irreps of normalizers. The normalizers of $S_3$ are
\bq
\mathcal{N}_e &=& S_3 \\
\mathcal{N}_x &=& \{ e,x \} \cong \cN_{xy} \cong \cN_{xy^2}\cong \mathbb{Z}_2  \\
\mathcal{N}_y = \cN_{y^2} &=& \{ e,y,y^2 \} \cong \mathbb{Z}_3 
\eq

Notice that sometimes normalizers of different elements of a conjugacy class differ (e.g. for elements $x$, $xy$ and $xy^2$ above), however, as remarked in Sec.~\ref{sec:anyons}, these are isomorphic to each other. Thus, the irreps on them will be identical.

The irreps of each of these normalizer subgroups are listed in Tables~\ref{tab:irreps_of_S3}-\ref{tab:irreps_of_Z3_Z2}; the irreps of $\cN_{C_g}$ are the possible charge labels for a dyon with flux label $C_g$. In those tables, and for the remainder of this paper, $\omega = \exp(2 \pi i / 3)$ and $\bar{\omega} = \omega^2 = \exp(4 \pi i / 3)$ are the third complex roots of unity.

\begin{table}
	\centering
		\begin{tabular}{|c||cccccc|}
			\hline
			$S_3$ & $e$ & $y$ & $y^2$ & $x$ & $xy$ & $xy^2$ \\
			\hline
			\hline
			$\Gamma_1^{S_3}$ & 1 & 1 & 1 & 1 & 1 & 1 \\
			\hline
			$\Gamma_{-1}^{S_3}$ & 1 & 1 & 1 & -1 & -1 & -1 \\
			\hline
			$\Gamma_{2}^{S_3}$ & $\begin{pmatrix} 1 & 0 \\ 0 & 1 \end{pmatrix}$ 
			& $\begin{pmatrix} \bar{\omega} & 0 \\ 0 & \omega \end{pmatrix}$ 
			& $\begin{pmatrix} \omega & 0 \\ 0 & \bar{\omega} \end{pmatrix}$ 
			& $\begin{pmatrix} 0 & 1 \\ 1 & 0\end{pmatrix}$ 
			& $\begin{pmatrix} 0 & \omega \\ \bar{\omega} & 0\end{pmatrix}$ 
			& $\begin{pmatrix} 0 & \bar{\omega} \\ \omega & 0\end{pmatrix}$ \\
			\hline
		\end{tabular}
		\caption{Irreducible representations of $\cN_{C_e} = S_3$.}
		\label{tab:irreps_of_S3}
\end{table}

\begin{table}
	\centering
	\begin{tabular}{|c||ccc|}
			\hline
			$\mathbb{Z}_3$ & $e$ & $y$ & $y^2$ \\
			\hline
			\hline
			$\Gamma_1^{\mathbb{Z}_3}$ & 1 & 1 & 1 \\
			\hline
			$\Gamma_{\omega}^{\mathbb{Z}_3}$ & 1 & $\omega$ & $\bar{\omega}$ \\
			\hline
			$\Gamma_{\bar{\omega}}^{\mathbb{Z}_3}$ & 1 & $\bar{\omega}$ & $\omega$ \\
			\hline
		\end{tabular}
		\hspace{1cm}
		\begin{tabular}{|c||cc|}
			\hline
			$\mathbb{Z}_2$ & $e$ & $x$ \\
			\hline
			\hline
			$\Gamma_1^{\mathbb{Z}_2}$ & 1 & 1 \\
			\hline
			$\Gamma_{-1}^{\mathbb{Z}_2}$ & 1 & -1 \\
			\hline
		\end{tabular}
		\caption{Irreducible representations of (a) $\cN_{C_y} = \mathbb{Z}_3$ and (b) $\cN_{C_x} = \mathbb{Z}_2$.}
		\label{tab:irreps_of_Z3_Z2}
\end{table}

The full list of anyons of $\mathcal{D}(S_3)$ is given in Table~\ref{tab:D(S3)_anyons}.

\begin{table}
	\centering
		\begin{tabular}{|c||ccccc|}
			\hline
			Label & $C_g$ & $\mathcal{N}_g$ & Irrep. & Q.dim. & Type \\
			\hline
			\hline
			A & $C_e$ & $S_3$ & $\Gamma^{S_3}_1$ & 1 & vacuum \\
			\hline
			B & $C_e$ & $S_3$ & $\Gamma^{S_3}_{-1}$ & 1 & chargeon \\
			\hline
			C & $C_e$ & $S_3$ & $\Gamma^{S_3}_2$ & 2 & chargeon \\
			\hline
			\hline
			D & $C_x$ & $\mathbb{Z}_2$ & $\Gamma^{\mathbb{Z}_2}_1$ & 3 & fluxon \\
			\hline
			E & $C_x$ & $\mathbb{Z}_2$ & $\Gamma^{\mathbb{Z}_2}_{-1}$ & 3 & dyon \\
			\hline
			\hline
			F & $C_y$ & $\mathbb{Z}_3$ & $\Gamma^{\mathbb{Z}_3}_1$ & 2 & fluxon \\
			\hline
			G & $C_y$ & $\mathbb{Z}_3$ & $\Gamma^{\mathbb{Z}_3}_{\omega}$ & 2 & dyon \\
			\hline
			H & $C_y$ & $\mathbb{Z}_3$ & $\Gamma^{\mathbb{Z}_3}_{\bar{\omega}}$ & 2 & dyon \\
			\hline
		\end{tabular}
		\caption{Anyons of $\mathcal{D}(S_3)$ with their charge and flux labels, quantum dimensions and type.}
		\label{tab:D(S3)_anyons}
\end{table}

Other properties of $\mathcal{D}(S_3)$, such as the complete set of fusion rules of its anyons, and its $S$-matrix can be found in Appendix~\ref{sec:fusion_Smatrix_DS3}.

\subsubsection{Relation between anyons}
\label{subsubsec:irrep-splitting}

One peculiar mathematical property of Drinfeld doubles $\mathcal{D}(G)$ is the relation of irreps of $G$ to the irreps of its normalizer subgroups. It turns out that while gauge-invariant charge labels of dyons are indeed irreps of normalizers, these can be connected to irreps of the full group, $G$. For example, for $\mathcal{D}(S_3)$ we find that the various charge labels are related to each other through restriction of irreps. If we restrict irreps of $S_3$ to the normalizer group $\cN_x$:
\bq \label{eq:irrep-splitting_Nx_1}
\Gamma_1^{S_3} |_{\cN_x} &=& \Gamma_1^{\cN_x} \\
\Gamma_{-1}^{S_3} |_{\cN_x} &=& \Gamma_{-1}^{\cN_x} \\
\label{eq:irrep-splitting_Nx_3}
\Gamma_{2}^{S_3} |_{\cN_x} &=& \Gamma_1^{\cN_x} \oplus \Gamma_{-1}^{\cN_x} 
\eq
Similarly, restricting irreps of $S_3$ to the group $\cN_y$:
\bq \label{eq:irrep-splitting_Ny_1}
\Gamma_1^{S_3} |_{\cN_y} &=& \Gamma_1^{\cN_y} \\
\label{eq:irrep-splitting_Ny_2}
\Gamma_{-1}^{S_3} |_{\cN_y} &=& \Gamma_1^{\cN_y} \\
\label{eq:irrep-splitting_Ny_3}
\Gamma_{2}^{S_3} |_{\cN_y} &=& \Gamma_\omega^{\cN_y} \oplus \Gamma_{\bar{\omega}}^{\cN_y} 
\eq

The diagram in Fig.~\ref{fig:orth_resolution}, which is a generalized, non-Abelian version of Fig.~\ref{fig:anyon-splitting-diagram_Z2}, now for $\mathcal{D}(S_3),$ summarizes these connections. Rows of this diagram correspond to flux labels (conjugacy classes) of the theory, and in the rows anyons are shown that are labeled by those flux labels. Notice that there are six rows of the diagram, while only three flux labels: each flux label incorporates as many rows as is the cardinality of the appropriate conjugacy class; and we may think about each individual row as labeled by a specific element of the conjugacy class.

Columns of the diagram are labeled by irreps of the group $S_3$: the three irreps $\Gamma_1^{S_3}$, $\Gamma_{-1}^{S_3}$ and $\Gamma_2^{S_3}$ each label as many columns as is the number of independent parameters for that irrep ($1$ and $1$ for the one-dimensional irreps, and $4$ for the two-dimensional irrep). Anyons are distributed amongst these charge labels according to their charge labels. An anyon label is shown in a certain column if and only if that anyon has \emph{charge flavors} described by that charge label. We can deduce these relations by taking the irrep splitting relations Eqs.~\eqref{eq:irrep-splitting_Nx_1}-\eqref{eq:irrep-splitting_Ny_3} into account.

The connection between anyon labels and irreps of $S_3$ detailed here will have an interesting consequence for any analysis conducted for this double. If we were to fundamentally modify all elements of the algebra related to an irrep of $S_3$, several anyons and anyon flavors would be affected by that: all of those related to the modified charge label through these intricate irrep-restrictions. As an example, modifying (e.g. completely removing) the charge label $\Gamma^{S_3}_{-1}$ would not only have an effect on anyon $B$ (the one directly labeled by $\Gamma^{S_3}_{-1}$), but also on anyons $E$ and $F$, as some of their charge flavors are related to the modified irrep $\Gamma_{-1}^{S_3}$ (see Fig.~\ref{fig:orth_resolution}).

An additional useful property of the diagram in Fig.~\ref{fig:orth_resolution} is that the number of distinct squares corresponding to an anyon equals its squared quantum dimension \cite{KL17}, each distinct square of the diagram representing a different flavor of that anyon. Meanwhile, the total number of squares in the diagram ($36$) is exactly the square dimension of the total quantum dimension of this theory. These properties will be utilized in our analysis in Sec.~\ref{sec:phase_diagram} to predict dimensions of new anyons, obtained by forbidding some of their flavors through the forbiddance of some of the charge and/or flux labels.

\begin{figure}
\begin{centering}
\includegraphics[width=0.4\textwidth]{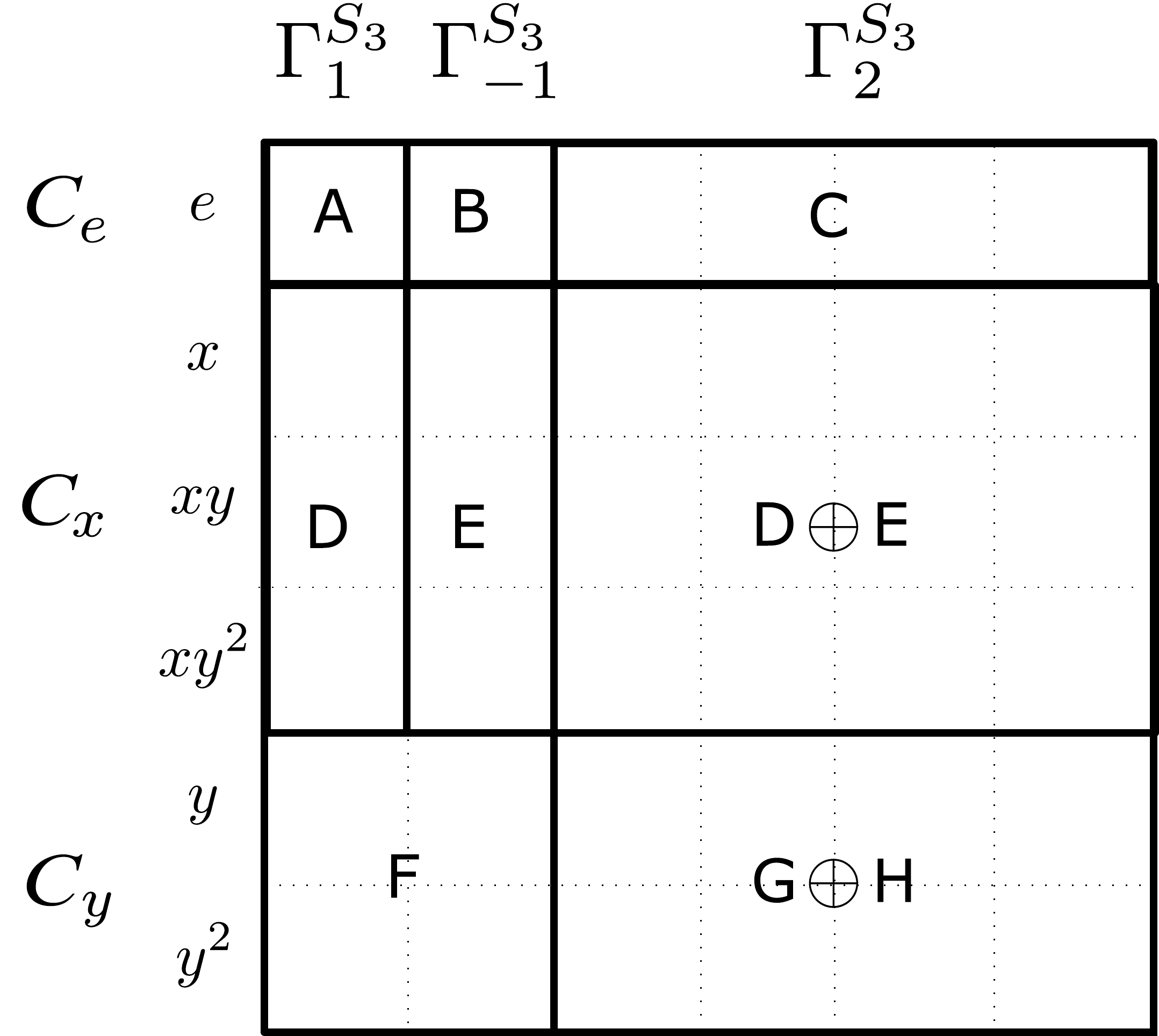}
\caption[Connections between the sets of labels $\{ \Gamma^{S_3}_1, \Gamma^{S_3}_{-1}, \Gamma^{S_3}_2 \}$, $\{ C_e, C_x, C_y \}$ and (flavors of) anyons of $\mathcal{D}(S_3)$]{Connections between the sets of labels $\{ \Gamma^{S_3}_1, \Gamma^{S_3}_{-1}, \Gamma^{S_3}_2 \}$, $\{ C_e, C_x, C_y \}$ and (flavors of) anyons of $\mathcal{D}(S_3)$. The number of rows and columns for each flux and charge label correspond to the number of distinct flavors that label describes. The number of squares corresponding to an anyon label is the squared dimension of that anyon.}
\label{fig:orth_resolution}
\end{centering}
\end{figure}

As a final note, we'd like to point out that such connections between irreps of the full group $G$ and irreps of the normalizer subgroups can always be established, for any group $G$, and a diagram equivalent to Fig.~\ref{fig:orth_resolution} can always be drawn \cite{KL17}. For Abelian doubles ($G = \mathbb{Z}_d$), these connections will be trivial. The normalizer subgroups in such a case will be the full group, $G$, thus the irreps of $G$ and those of the normalizers will be exactly the same. For non-Abelian doubles, the connections will be more interesting, as there will be non-trivial normalizer subgroups, whose irreps will sometimes split (Eqs.~\eqref{eq:irrep-splitting_Ny_1}-\eqref{eq:irrep-splitting_Ny_2} above), sometimes merge (Eq.~\eqref{eq:irrep-splitting_Nx_3} or Eq.~\eqref{eq:irrep-splitting_Ny_3}) to yield an irrep of $G$.

\section{Phase transitions in Drinfeld doubles}
\label{sec:phase_transitions_general}

In this paper we analyze phase transitions in Drinfeld doubles, induced by \emph{forbidding} one (or several) flux or charge labels. Anyons described by the forbidden label cannot be created, both as thermal excitations from the vacuum, and through braiding/fusions of other anyons. Forbidding anyons in the double $\mathcal{D}(G)$ will modify the dynamics and result in the emergence of a new theory for the unforbidden set of anyons.

In Sec.~\ref{subsec:mechanisms} we review the physical processes the double $\mathcal{D}(G)$ undergoes through these phase transitions. Then, in Sec.~\ref{subsec:protocol} we present the mathematical procedure for modeling the transitions, and how one can find the proper emergent theory in all cases. We will present our detailed analysis of the phase transitions in $\cD(S_3)$ in Sec.~\ref{sec:phase_diagram} and in Appendix~\ref{sec:all_theories}.

\subsection{Mechanisms of phase transition}
\label{subsec:mechanisms}

To find the new topological theory emerging after forbidding flux and charge labels, one can picture that the original theory, i.e, $\mathcal{D}(G)$, undergoes a series of physical processes. These processes are conceptual tools to illustrate the way the topological model restructures itself. This representation of our results is complementary to mathematical transformations on the $S$-matrix, a procedure described in Sec.~\ref{subsec:protocol}.

Those processes, which we will review in details in the following paragraphs are 
\begin{enumerate}
 \item Forbiddance of certain anyons, related to the forbidden label;
 \item Condensation of some particles, i.e., they become indistinguishable from the vacuum;
 \item Non-Abelian anyons with quantum dimension larger than one splitting up into lower-dimensional anyons, due to their previously indistinguishable flavors becoming distinguishable in the new theory;
 \item (Additional) partial forbiddance/spontaneous symmetry breaking of anyons.
\end{enumerate}

\subsubsection{Forbiddance}

Anyons that are never created in the new theory, because of forbidding a label (a conjugacy class or an irrep of $G$) they are related to, become \emph{forbidden}. This is a fundamental process that happens in every one of the phase transitions we consider. In fact, forbiddance of these particles is what induces the other processes listed in the current section.

Forbidden anyons are neither created through fusion, nor from the vacuum. These constraints have a practical significance for our calculations, as we will use these properties to find the new theories. In particular, our findings will be justified by performing transformations on the S-matrix, a mathematical procedure described in Sec.~\ref{subsec:protocol}.

\subsubsection{Condensation}

In certain cases the resulting set of anyons after forbiddance will not form a valid theory. In these cases, however, some particles become indistinguishable from the vacuum, and \emph{condense} to it.

This happens when through forbiddance we remove the full set of anyons that have made the now indistinguishable anyons previously distinguishable in the original theory. For example, anyons $A$ (the vacuum) and $B$ of $\cD(S_3)$ are only distinguished through their braiding relations to anyons labeled by the conjugacy class $C_x$ (anyons $D$ and $E$). Thus, when we forbid $C_x$ in Sec.~\ref{subsubsec:DZ3}, anyon $B$ condenses to the vacuum, $A$.

We identify these processes through the $S$-matrix of the new theory: when the new $S$-matrix has identical (or linearly dependent) entries for two particles, that is an indicator that those two particles have become indistinguishable from each other in the new theory.

\subsubsection{Splitting of particles}

Anyons with dimension higher than $1$ in the original model sometimes \emph{split up} into two or more 1-dimensional anyons in the new theory. This splitting happens when the different flavors of the anyon in question, indistinguishable in the original model by braiding or fusion, become distinguishable in the new theory. Thus, the newly distinguishable states form separate anyons.

Mathematically, this process can be shown by relating the $S$-matrix of (the new) \emph{theory A} to the $S$-matrix of (a different) \emph{theory B}. If by merging anyons $g$ and $h$ in \emph{theory B} we arrive to the $S$-matrix of \emph{theory A}, with merged anyon $gh$, we conclude that the $gh$ anyon of \emph{theory A} could split up and become the anyons $g$ and $h$ in \emph{theory B}.

An example of this is analyzed in Sec.~\ref{subsubsec:DZ3}. After anyon $B$ condenses to the vacuum, we arrive at an $S$-matrix that is identical to the matrix we get when we merge fluxes and charges in $\mathcal{D}(\mathbb{Z}_3)$. Thus, the emergent theory of Sec.~\ref{subsubsec:DZ3}, and the quantum double $\mathcal{D}(\mathbb{Z}_3)$ are related through a series of condensation and splitting of anyons.

\subsubsection{Partial forbiddance, Spontaneous symmetry breaking}

\emph{Partial forbiddance} of an anyon happens when flavors otherwise indistinguishable split up in a way that some of the flavors become completely forbidden, while others remain unaffected.

When partial forbiddance emerges as a process induced by previous forbiddance and condensation processes, the accompanying symmetry breaking is spontaneous. Alternatively, when partial forbiddance is forced unto the model, the symmetry breaking is not spontaneous. Examples of both versions can be seen in Secs.~\ref{subsec:su2_4_1}-\ref{subsec:su2_4_2}.

Both versions of partial forbiddance will lead to the anyons in question losing some of their flavors, and as a result their quantum dimensions will decrease accordingly.

\subsection{Our protocol}
\label{subsec:protocol}

The protocol to find the new theories emerging from $\mathcal{D}(G)$ is the following. First, we forbid a (set of) conjugacy classes and irreps, i.e. all anyons and anyon flavors related to those labels. They won't be created either from the vacuum or through fusion processes. Then, we will construct the $S$-matrix of the new theory, based on the new set of fusion rules for the subset of unforbidden anyons.

\subsubsection{Truncating fusion rules}
To enforce forbiddance of anyons, we start with the original fusion rules of $\mathcal{D}(G)$ and truncate them, so that the forbidden anyons are never created.

\subsubsection{Constructing the $N$-symbols}
We use the remaining (now closed) set of fusion rules to construct the $N$-symbols (see Eq.~(\ref{eq:N_symbol_definition})) for each particle. In this step we can utilize an anyon-diagram of $\mathcal{D}(G)$, a generalized version of Figs.~\ref{fig:anyon-splitting-diagram_Z2} and \ref{fig:orth_resolution}, that encapsulates the relation between irreps of $G$ and irreps of its normalizer subgroups. We are guided by the assumption that the quantum dimensions of anyons are those inferred from this diagram, i.e. the quantum dimension changes according to the diagram when the anyon in question is partially forbidden, otherwise it is unchanged from the dimensions of $\mathcal{D}(G)$. However, in certain cases assuming the quantum dimensions this way doesn't yield a valid theory, particularly when a spontaneous symmetry breaking is among the physical processes. In those cases we aim to find a theory with the appropriate fusion rules only, and disregard (some of) the values of quantum dimensions.

\subsubsection{Inverting the Verlinde-formula}
We then compute the eigenvalues of the $N$-symbols, and use the relation between these eigenvalues and the elements of the $S$-matrix \cite{Kitaev06} to invert the (\ref{eq:verlinde}) Verlinde-formula and reverse-engineer the $S$-matrix. Throughout this process, we use the symmetries of the $S$-matrix of the original $\mathcal{D}(G)$ model, only to make a choice between new $S$-matrices that otherwise equivalently reproduce the fusion rules and quantum dimensions of the new model. We find the prefactor (or the exact entries, rather than simply their relative values) of the $S$-matrix by enforcing that it be unitary.

If we arrive at an $S$-matrix that has linearly dependent entries for certain anyons (it does not recreate all the fusion rules we have started with, and is not unitary in these cases), we conclude that further physical processes will happen: condensation of some particles. For a list of the possible physical processes, see Sec.~\ref{subsec:mechanisms}.

\section{Phases of $\mathcal{D}(S_{3})$}
\label{sec:phase_diagram}

In this section we will present phases that can emerge from the theory $\mathcal{D}(S_3)$ through a series of physical processes, outlined in Sec.~\ref{sec:phase_transitions_general}. We focus on $\mathcal{D}(S_3)$ as it is the simplest non-Abelian double, and as such novel theories could emerge from it.

We consider all combinations of charge and flux labels of $\mathcal{D}(S_3)$, and derive the theory that emerges as a result of forbidding each set of labels. The exact protocol to find the emerging theories is outlined in Sec.~\ref{subsec:protocol}. We include three representative phases in this section, as well as the detailed analysis of how those theories emerge from $\mathcal{D}(S_3)$. The complete description of all possible emergent theories is in Appendix~\ref{sec:all_theories}.

\subsection{Forbidding the conjugacy class $C_x$ leads to $\cD(\mathbb{Z}_3)$} \label{subsubsec:DZ3}

Let us first give a high-level, intuitive overview of what happens when we forbid the conjugacy class $C_x$.

First, simply concentrate on the group $S_3 = \{ e,x,xy,xy^2,y,y^2 \}$ instead of the double $\cD(S_3)$. Then, remove the elements $C_x = \{ x,xy,xy^2 \}$ from this group. The remaining set of elements is $\{ e,y,y^2 \}$, which themselves form a group: $\mathbb{Z}_3$.

It turns out that when we do the same procedure for $\cD(S_3)$, and forbid the conjugacy class $C_x$ (i.e. the elements $\{ x,xy,xy^2 \}$ among the flux labels), the remaining set of anyons will form the theory $\cD(\mathbb{Z}_3)$.

However, this parallel is not at all trivial, as we have never considered the charge labels in this argument. What happens is the following process.

\textit{Step 1 ---} Forbidding the conjugacy class $C_x$ confines anyons $D$ and $E$ of the original model, leaving the set $\{A, B, C, F, G, H\}$.

\textit{Step 2 ---} As anyons $A$ and $B$ only differ through their braiding relations with, the now forbidden, anyons $D,E$ (see Appendix~\ref{sec:fusion_Smatrix_DS3}), anyons $A$ and $B$ will become indistinguishable in the new theory. $B$ will condense to the vacuum, $A$.

This induces the partial forbiddance of $C,F,G,H$, which all lose their antisymmetric charge subspaces (as the fundamental antisymmetric chargeon, $B$ disappeared).

Notice, that this is the step where all antisymmetric charge flavors disappear, necessary for the eventual emergence of $\cD(\mathbb{Z}_3)$. (Just as removing the $C_x$ labels from the flux flavors was necessary.)

\textit{Step 3 ---} This is followed by the split-up of higher-dimensional anyons $C, F, G, H$ into two distinct 1-dimensional, Abelian anyons each. This split-up can happen, as the fluxons previously making their flavors indistinguishable were part of the $C_x$ conjugacy class, now forbidden.

For example, anyon $F$ originally had two flux flavors: $y$ and $y^2$, being indistinguishable due to the braiding process $x y x = y^2$, which would transform one flavor to the other. After forbidding the class $C_x$, such a process is now forbidden, and the $y$ and $y^2$ flavors of $F$ become distinguishable.

The final set of anyons is thus: $\{A',C_a,C_b,F_a,F_b,G_a,G_b,H_a,H_b\}$, which will form the quantum double $\mathcal{D}(\mathbb{Z}_3)$. Labels in the form of $X_a$ and $X_b$ correspond to two distinct one-dimensional flavors of anyon $X$, which have split up to form two separate anyons. The correspondence between standard labels of $\mathcal{D}(\mathbb{Z}_3)$ and our labels is shown in Table~\ref{tab:D(Z3)_labels}. The process described above is shown in Fig.~\ref{fig:Cx_limit}.

\begin{table}
	\centering
		\begin{tabular}{ |c||c|c|c| } 
			\hline
			 & $1$ & $e_1$ & $e_2$ \\
			\hline
			\hline
			$1$ & $A'$ & $C_a$ & $C_b$ \\ 
			\hline
			$m_1$ & $F_a$ & $G_a$ & $H_a$ \\ 
			\hline
			$m_2$ & $F_b$ & $H_b$ & $G_b$ \\
			\hline
		\end{tabular}
		\caption{Correspondence between standard anyon labels of $\mathcal{D}(\mathbb{Z}_3)$ and anyon labels evolved from of $\mathcal{D}(S_3)$, when forbidding the conjugacy class $C_x$.}
		\label{tab:D(Z3)_labels}
\end{table}

\begin{figure*}
\begin{centering}
\includegraphics[width=\textwidth]{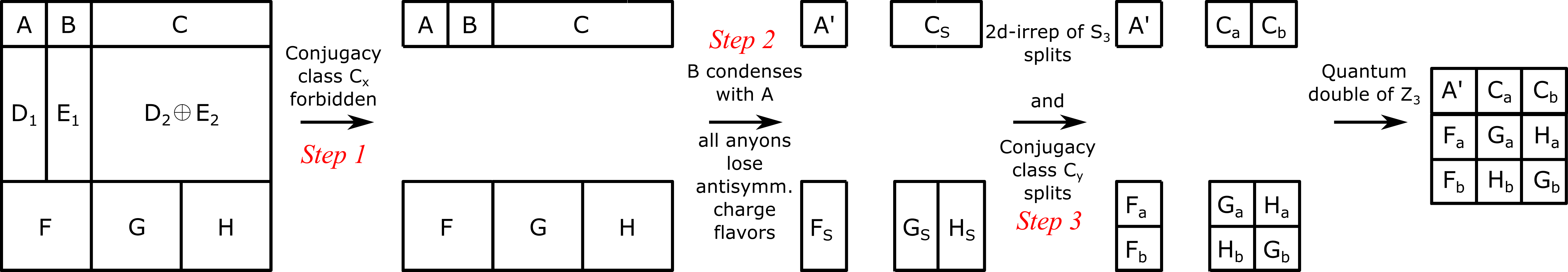}
\caption{Pictorial representation of the process transforming $\mathcal{D}(S_3)$ to $\mathcal{D}(\mathbb{Z}_3)$, when we forbid conjugacy class $C_x$. The subscripts $S$ denote the symmetric subspaces of anyons ($F_S \equiv F_1$ of Fig.~\ref{fig:anyon-splitting-2}), and subscripts $a$ and $b$ denote the final, $1$-dimensional components of the anyons. $A'$ is the vacuum after the condensation of $B$. (Forbidding $C_x$ and $\Gamma_{-1}$  at the same time, as done in Sec.~\ref{subsec:DZ3_2}, will modify \textit{Step 2} of this process: instead of condensation, $B$ will be forbidden, otherwise the rest of the steps are identical.)}
\label{fig:Cx_limit}
\end{centering}
\end{figure*}

\subsubsection*{Proof}

We can prove that anyons of $\mathcal{D}(S_3)$ indeed undergo the above process by following the protocol detailed in Sec.~\ref{subsec:protocol}. Throughout this proof, we will refer to the steps numbered in the overview given above, which coincide with the steps shown in Fig.~\ref{fig:Cx_limit}.

\textit{Step 1 ---} The first step in this physical process is the forbiddance of anyons $D,E$. In our protocol, we truncate the fusion rules of $\mathcal{D}(S_3)$ for the remaining set of particles $\{A, B, C, F, G, H\}$, these truncated rules are shown in Table~\ref{tab:fusion_ABCFGH}.
\begin{table}
	\centering
		\begin{tabular}{ |c||c|c|c|c|c|c| } 
			\hline
			 & A & B & C & F & G & H \\
			\hline
			\hline
			A & A & B & C & F & G & H \\ 
			\hline
			B & B & A & C & F & G & H \\ 
			\hline
			C & C & C & A $\oplus$ B $\oplus$ C & G $\oplus$ H & F $\oplus$ H & F $\oplus$ G \\
			\hline
			F & F & F & G $\oplus$ H & A $\oplus$ B $\oplus$ F & H $\oplus$ C & G $\oplus$ C \\
			\hline
			G & G & G & F $\oplus$ H & H $\oplus$ C & A $\oplus$ B $\oplus$ G & F $\oplus$ C \\
			\hline
			H & H & H & F $\oplus$ G & G $\oplus$ C & F $\oplus$ C & A $\oplus$ B $\oplus$ H \\
			\hline
		\end{tabular}
		\caption{Fusion rules of remaining particles after forbidding conjugacy class $C_x$.}
		\label{tab:fusion_ABCFGH}
\end{table}

Using the new set of fusion rules we can construct the new $S$-matrix, following our protocol, and we have:
\be \label{eq:SmatrixABCFGH}
S=\frac{1}{6}\left[\begin{array}{cccccc}
1 & 1 & 2 & 2 & 2 & 2 \\
1 & 1 & 2 & 2 & 2 & 2 \\
2 & 2 & 4 & -2 & -2 & -2 \\
2 & 2 & -2 & 4 & -2 & -2 \\
2 & 2 & -2 & -2 & 4 & -2 \\
2 & 2 & -2 & -2 & -2 & 4
\end{array} \right] .
\ee

Now, we can state the following two lemmas:

\begin{lem}[Condensation (\textit{Step 2})] \label{lem:condensation_DZ3}
Anyon $B$ condenses to the vacuum, and the following $S$-matrix describes the emerging theory:
\be \label{DZ3_correspondence_mtx}
\frac{1}{3} \left[\begin{array}{ccccc}
1 & \sqrt{2} & \sqrt{2} & \sqrt{2} & \sqrt{2} \\
\sqrt{2} & 2 & -1 & -1 & -1 \\
\sqrt{2} & -1 & 2 & -1 & -1 \\
\sqrt{2} & -1 & -1 & 2 & -1 \\
\sqrt{2} & -1 & -1 & -1 & 2
\end{array} \right] .
\ee
\end{lem}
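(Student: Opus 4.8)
The plan is to read the condensation straight off the non-unitary matrix in Eq.~\eqref{eq:SmatrixABCFGH} and then reconstruct a genuine, unitary $S$-matrix for the surviving anyons. First I would observe that in Eq.~\eqref{eq:SmatrixABCFGH} the rows (and, by symmetry, the columns) indexed by $A$ and $B$ are literally identical, i.e. $S_{Ba}=S_{Aa}$ for every $a\in\{A,B,C,F,G,H\}$. By the criterion recalled in Sec.~\ref{subsec:mechanisms}, linearly dependent $S$-matrix rows mean the two particles cannot be told apart by any braiding process; since $A$ is the vacuum, this is exactly the statement that $B$ braids trivially with everything and condenses to $A$. This also matches the physical picture, that $A$ and $B$ were distinguished only through their braidings with the now-forbidden $C_x$ anyons $D$ and $E$.

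Next I would fix the anyon content and quantum dimensions of the emergent theory. After identifying $A$ and $B$ as a single vacuum $A'$, the surviving labels are $\{A',C,F,G,H\}$. Their quantum dimensions I would read off from Fig.~\ref{fig:orth_resolution} together with the squares-equal-squared-dimension rule: $A'$ keeps $d_{A'}=1$, while the disappearance of the antisymmetric chargeon $B=\Gamma^{S_3}_{-1}$ forces each of $C,F,G,H$ to lose its antisymmetric charge flavor (cf.\ the restrictions in Eqs.~\eqref{eq:irrep-splitting_Nx_3} and \eqref{eq:irrep-splitting_Ny_3}), halving their squared dimension from $4$ to $2$, i.e. $d=\sqrt2$. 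Hence $\mathcal{D}^2=1+4\cdot2=9$ and $\mathcal{D}=3$, so the condensed theory becomes a candidate for a bona fide unitary model.

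With the corrected dimensions in hand I would build the $S$-matrix by deleting the redundant $B$ index from Eq.~\eqref{eq:SmatrixABCFGH} and renormalizing. Writing $\tilde S$ for the $5\times5$ submatrix of Eq.~\eqref{eq:SmatrixABCFGH} on $\{A,C,F,G,H\}$ and $D=\mathrm{diag}(\sqrt2,1,1,1,1)$, I expect $D\,\tilde S\,D$ to reproduce exactly Eq.~\eqref{DZ3_correspondence_mtx}: the factor $\sqrt2$ on the vacuum index rescales the vacuum row $S_{A'a}=d_a/\mathcal{D}$ from the stale normalization to the correct $\mathcal{D}=3$ with the reduced dimensions, while leaving the $\{C,F,G,H\}$ block (whose relative braidings are unchanged by the condensation) intact. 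I would then verify that the result is real, symmetric, and orthogonal, hence unitary: each row squares to $\tfrac19(1+2+2+2+2)=1$, and distinct rows are orthogonal, e.g. the vacuum row against a $C,F,G,H$ row gives $\tfrac19(\sqrt2+2\sqrt2-3\sqrt2)=0$. Its vacuum row returns $(1,\sqrt2,\sqrt2,\sqrt2,\sqrt2)$ with $\mathcal{D}=3$, as required.

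The main obstacle is the second step: justifying the reduction $2\mapsto\sqrt2$ for $C,F,G,H$ as a forced consequence of condensing $B$, rather than merely asserting it; this is the physically substantive content (the partial forbiddance of antisymmetric flavors encoded in Fig.~\ref{fig:orth_resolution}), whereas detecting the degeneracy in Eq.~\eqref{eq:SmatrixABCFGH} and checking unitarity of Eq.~\eqref{DZ3_correspondence_mtx} are routine. I would also flag that this intermediate five-anyon theory is not yet a modular category: the $\sqrt2$-dimensional labels signal that $C,F,G,H$ are each poised to split, and inverting the Verlinde formula on Eq.~\eqref{DZ3_correspondence_mtx} yields non-integer multiplicities such as $N_{CC}^{C}=\sqrt2/2$. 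Integrality is restored only after the subsequent splitting, where Eq.~\eqref{DZ3_correspondence_mtx} matches, via merging the flux/charge pairs, the $S$-matrix of $\mathcal{D}(\mathbb{Z}_3)$; so ``describes the emerging theory'' should be read as giving the correct braiding data at this condensed stage, to be completed by Step 3.
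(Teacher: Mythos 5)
Your proposal is correct and is essentially the paper's own argument in different packaging: the paper changes basis to $\{(A+B)/\sqrt{2},(A-B)/\sqrt{2},C,F,G,H\}$ and drops the identically-zero antisymmetric row/column, which—because the $A$ and $B$ rows of Eq.~\eqref{eq:SmatrixABCFGH} coincide—is exactly your operation of deleting the duplicate $B$ index and conjugating by $\mathrm{diag}(\sqrt{2},1,1,1,1)$, since $\tfrac{1}{\sqrt{2}}(S_{Aa}+S_{Ba})=\sqrt{2}\,S_{Aa}$. Your added material (the quantum-dimension bookkeeping giving $\mathcal{D}=3$, the explicit unitarity check, and the observation that the intermediate five-anyon theory has non-integer Verlinde multiplicities and is only completed by the Step-3 splitting) is sound and consistent with the paper's reading of the lemma.
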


\begin{proof}
We transform the \eqref{eq:SmatrixABCFGH} $S$-matrix from its current basis
$$\{A,B,C,F,G,H\}$$
to the basis
$$\{ (A+B)/\sqrt{2}, (A-B)/\sqrt{2}, C, F, G, H \}$$
in the same spirit as how Ref.~\cite{WLW+08} relates the toric code to the Ising model. The transformation applied here yields zero entries for the entire row and column corresponding to $(A-B)/\sqrt{2}$. Dropping these we get the matrix \eqref{DZ3_correspondence_mtx}.
\end{proof}

\begin{lem}[Splitting of anyons (\textit{Step 3})] \label{lem:splitting_DZ3}
Merging anyons of $\mathcal{D}(\mathbb{Z}_3)$ (chargeons $\{ e_1,e_2 \}$, fluxons $\{ m_1,m_2 \}$, and dyons $\{ e_1 m_1,e_2 m_2 \}$, $\{ e_1 m_2,e_2 m_1 \}$, pairwise) yields a block-diagonal $S$-matrix, the physical block of which is \eqref{DZ3_correspondence_mtx}.
\end{lem}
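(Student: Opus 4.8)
The plan is to realize the merging as a change of basis on the $9\times 9$ $S$-matrix of $\mathcal{D}(\mathbb{Z}_3)$ (given in Appendix~\ref{sec:fusion_Smatrix_DS3}, and obtained by specializing~\eqref{eq:Smatrix_doubles}). Labeling each Abelian anyon by its (flux, charge) pair $(a,b)\in\mathbb{Z}_3\times\mathbb{Z}_3$, this matrix reads $S_{(a,b),(a',b')}=\tfrac{1}{3}\,\omega^{a'b-ab'}$ with $\omega=\exp(2\pi i/3)$. The four merging pairs in the statement are exactly the orbits of the charge-conjugation (antiparticle) involution $\sigma:(a,b)\mapsto(-a,-b)$: it fixes the vacuum $1=(0,0)$ and swaps $e_1\leftrightarrow e_2$, $m_1\leftrightarrow m_2$, $e_1 m_1\leftrightarrow e_2 m_2$, and $e_1 m_2\leftrightarrow e_2 m_1$. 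I would define the merged (physical) basis as the symmetric combinations $A'=1$ and $X_S=\tfrac{1}{\sqrt 2}(x_1+x_2)$ for each pair $X=\{x_1,x_2\}$, together with the four antisymmetric vectors $X_A=\tfrac{1}{\sqrt 2}(x_1-x_2)$.

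The conceptual core is that block-diagonality is \emph{forced} by this symmetry and requires no computation. From $\omega^{a'(-b)-(-a)b'}=\overline{\omega^{a'b-ab'}}$ one gets $S_{\sigma(x),y}=\overline{S_{x,y}}$, and applying $\sigma$ to both indices yields $S_{\sigma(x),\sigma(y)}=S_{x,y}$, i.e. $S$ commutes with the permutation matrix $P_\sigma$. Since $\sigma$ is an involution, $P_\sigma$ has only eigenvalues $\pm1$, with the $+1$ eigenspace spanned by $\{A',C_S,F_S,G_S,H_S\}$ (dimension $5$) and the $-1$ eigenspace by the four antisymmetric vectors (dimension $4$). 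Because $[S,P_\sigma]=0$, the matrix $S$ preserves each eigenspace, so in the adapted basis it is block-diagonal with a $5\times 5$ symmetric block and a $4\times 4$ antisymmetric block, and all symmetric--antisymmetric cross terms vanish automatically.

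It then remains to evaluate the symmetric block $\tilde S_{X,Y}=\tfrac{1}{\sqrt{|X|\,|Y|}}\sum_{x\in X,\,y\in Y}S_{x,y}$ entry by entry and match it against~\eqref{DZ3_correspondence_mtx}. The arithmetic collapses via $\omega+\bar\omega=-1$: entries mixing a pure charge with a pure flux (e.g. $\tilde S_{C,F}$) give $\tfrac{1}{2}\cdot\tfrac{2}{3}(\omega+\bar\omega)=-\tfrac13$, the vacuum row gives $\tfrac{\sqrt2}{3}$ off the corner, and diagonal dyonic entries such as $\tilde S_{G,G}$ give $\tfrac{2}{3}$ because the exponent $a'b-ab'$ vanishes on the entire $\sigma$-orbit. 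This reproduces~\eqref{DZ3_correspondence_mtx} exactly, which together with Lemma~\ref{lem:condensation_DZ3} identifies the emergent theory as $\mathcal{D}(\mathbb{Z}_3)$.

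I expect the only genuine pitfall to be fixing the correct $S$-matrix convention for $\mathcal{D}(\mathbb{Z}_3)$: because the $S$-matrix definition involves the antiparticle, the phase exponent is $a'b-ab'$ rather than $a'b+ab'$, and this sign is precisely what makes the diagonal dyonic entries come out as $+\tfrac23$ instead of $-\tfrac13$. A secondary point worth a remark is that the discarded $4\times4$ antisymmetric block does not by itself form a consistent modular theory, which is why only the symmetric block is \emph{physical}.
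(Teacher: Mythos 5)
Your proposal is correct, and its skeleton is the same as the paper's proof: realize the merging as a basis change on the $9\times 9$ matrix \eqref{eq:DZ3_Smatrix}, pass to symmetric and antisymmetric combinations of the paired anyons, observe that the matrix becomes block-diagonal, discard the antisymmetric block (which lacks the vacuum), and match the symmetric block to \eqref{DZ3_correspondence_mtx}. The genuine difference is in how block-diagonality is established. The paper performs the transformation on the explicit tabulated matrix and reads off the vanishing cross terms; you instead note that the merged pairs are exactly the orbits of the antiparticle involution $\sigma:(a,b)\mapsto(-a,-b)$, write the entries in closed form as $S_{(a,b),(a',b')}=\tfrac13\,\omega^{a'b-ab'}$ (which indeed reproduces \eqref{eq:DZ3_Smatrix}, including its asymmetry, and is what \eqref{eq:Smatrix_doubles} gives for $\mathbb{Z}_3$), and deduce $[S,P_\sigma]=0$, so that $S$ preserves the $\pm1$ eigenspaces of $P_\sigma$ and the symmetric--antisymmetric cross terms vanish by symmetry rather than by computation. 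This buys conceptual clarity and generality: the same argument applies verbatim to any merging along orbits of an involution commuting with $S$, e.g.\ the $3\times3$ case of Lemma~\ref{lem:splitting_Z3}, whereas the paper's direct computation is shorter to write for this single instance but explains nothing about why the zeros appear. Your spot-checked entries ($\sqrt2/3$ on the vacuum row, $-1/3$ for charge--flux mixing, $+2/3$ on the dyonic diagonal) are all correct, and your warning about the convention $a'b-ab'$ versus $a'b+ab'$ is apt --- with the symmetric convention the dyonic diagonal would come out $-1/3$ and the match to \eqref{DZ3_correspondence_mtx} would fail. Your closing remark that the discarded antisymmetric block is not by itself a consistent theory plays the same role as the paper's observation that appending the vacuum to that block would produce a pathological $S$-matrix with a zero row and column.
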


\begin{proof}
We transform the matrix \eqref{eq:DZ3_Smatrix} from its current basis
$$\{ 1, e_1, e_2, m_1, m_2, e_1 m_1, e_2 m_1, e_1 m_2, e_2 m_2 \}$$
to a new basis, which is:
$$1/\sqrt{2} \left\{ \sqrt{2}, e_1+e_2, m_1+m_2, e_1 m_1+e_2 m_2, e_1 m_2+e_2 m_1, \right. $$
$$\left. e_1-e_2, m_1-m_2, e_1 m_1-e_2 m_2, e_1 m_2-e_2 m_1 \right\}$$

In this new basis we find the $S$-matrix is block-diagonal, the "upper block" is formed by the first 5 anyons (symmetric block), the "lower block" is formed by the last 4 anyons (antisymmetric block).

We can drop the antisymmetric block, on the basis that it doesn't include the vacuum state, and if we were to add the vacuum to it, we would get a pathological $S$-matrix with all zero entries for the first row and column. This leaves us the symmetric block, whose entries are identical to those of Eq.~\eqref{DZ3_correspondence_mtx}.
\end{proof}

Combining Lemmas~\ref{lem:condensation_DZ3} and \ref{lem:splitting_DZ3} we can conclude that the theory emerging by forbidding the conjugacy class $C_x$ will undergo a condensation (\textit{Step 2}), followed by splitting of anyons (\textit{Step 3}), and form $\mathcal{D}(\mathbb{Z}_3)$.  \qed

\subsection{Forbidding the conjugacy class $C_y$ leads to $SU(2)_{4}$} \label{subsec:su2_4_1}

Just like in Sec.~\ref{subsubsec:DZ3} we may try to understand the effect of forbidding $C_y$ through a high-level argument. As there, let's start by looking at the group $S_3$ only, rather than the double $\cD(S_3)$.

Forbidding the elements $C_y = \{ y,y^2 \}$ in $S_3$ would leave the set of elements $\{ e,x,xy,xy^2 \}$. However, in this set there is nothing to distinguish between the different flavors of $C_x$, that is between $x$, $xy$ and $xy^2$. Therefore, a spontaneous symmetry breaking happens, and the set of allowed elements transitions into $\{ e,x \}$ (or $\{ e,xy \}$ or $\{ e,xy^2 \}$, all are equivalent). The resulting set is isomorphic to the group $\mathbb{Z}_2$.

Based on this argument, one might argue that forbidding $C_y$ in the quantum double $\cD(S_3)$ will transform it into the quantum double of $\mathbb{Z}_2$: $\cD(\mathbb{Z}_2)$. However, this would be incorrect. Notice that the above argument only concerns the flux labels of the double, and doesn't address the resulting effect on the charge flavors. In fact, the charge flavors remain unaffected! (Note, that this is unlike the case investigated in Sec.~\ref{subsubsec:DZ3}. While there the charge flavors weren't immediately affected either, condensation was induced and that eventually modified the allowed set of charge flavors.)

As the forbiddance of $C_y$ only concerns the flux labels, the resulting theory won't be the simple $\cD(\mathbb{Z}_2)$, rather it could be a non-trivial theory. It turns out to be $SU(2)_4$.

These are the steps that happen (and while the above argument didn't yield the correct theory, it is still helpful to consult the steps the group $S_3$ underwent after forbidding $C_y$):

\textit{Step 1 ---} The process starts with the forbiddance of anyons $F,G,H$.

\textit{Step 2 ---} This in turn induces the partial forbiddance (or spontaneous symmetry breaking) of anyons $D$ and $E$. This happens because by forbidding the full conjugacy class $C_y$ we forbade the flux labels $y,y^2$, the exact fluxes that could transform one flux flavor of $D$ or $E$ (e.g. flavor $x$) into another flux flavor (to $xy$ or $xy^2$), through braiding. Thus, only one flux flavor of them will remain, be it $x$, $xy$ or $xy^2$ (hence the \emph{spontaneous} symmetry breaking).

At the end, we are left with the set of anyons $\{ A,B,C,D,E \}$, where $D$ and $E$ have modified dimensions compared to the original $\mathcal{D}(S_3)$ theory: we predict this modified dimension to be $\sqrt{3}$ for both $D$ and $E$, based on the pictorial argument shown in Fig.~\ref{fig:Cy_limit}.

This set corresponds to $SU(2)_4$ through:
$$A \equiv J_0, \; B \equiv J_2, \; C \equiv J_1, \; D \equiv J_{1/2}, \; E \equiv J_{3/2} $$
with the notation for anyons of $SU(2)_4$ explained in Sec.~\ref{subsubsec:SU(2)_4}.

\begin{figure*}
\begin{centering}
\includegraphics[width=\textwidth]{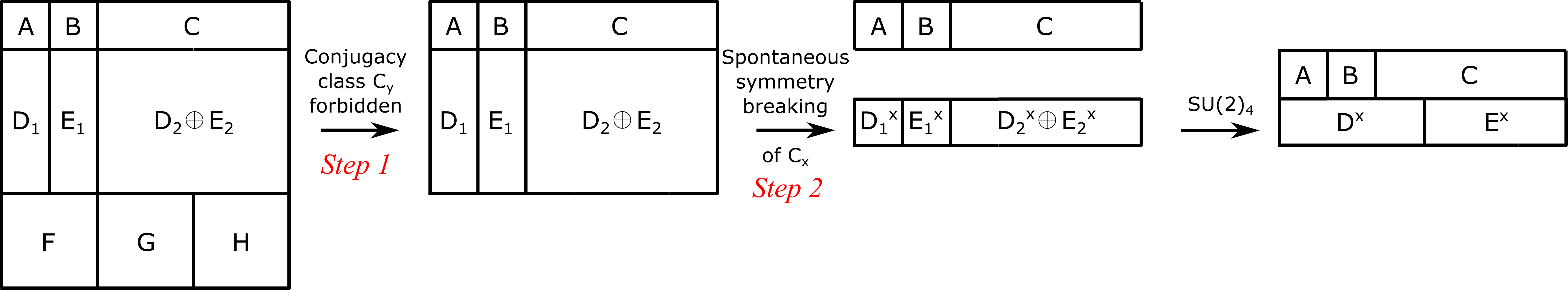}
\caption{Forbidding conjugacy class $C_y$ transforms $\mathcal{D}(S_3)$ to $SU(2)_4$. The superscripts $x$ denote the components of anyons $D$ and $E$ after the symmetry breaking.}
\label{fig:Cy_limit}
\end{centering}
\end{figure*}

\subsubsection*{Proof}

We follow the protocol outlined in Sec.~\ref{subsec:protocol}. We start by truncating the fusion rules of $\mathcal{D}(S_3)$ to the subset of anyons $\{ A,B,C,D,E \}$, these are shown in Table~\ref{tab:fusion_ABCDE}.
\begin{table}
	\centering
		\begin{tabular}{ |c||c|c|c|c|c| } 
			\hline
			 & A & B & C & D & E \\
			\hline
			\hline
			A & A & B & C & D & E \\ 
			\hline
			B & B & A & C & E & D \\ 
			\hline
			C & C & C & A $\oplus$ B $\oplus$ C & D $\oplus$ E & D $\oplus$ E \\
			\hline
			D & D & E & D $\oplus$ E & A $\oplus$ C & B $\oplus$ C \\
			\hline
			E & E & D & D $\oplus$ E & B $\oplus$ C & A $\oplus$ C \\
			\hline
		\end{tabular}
		\caption{Fusion rules of remaining particles after forbidding conjugacy class $C_y$.}
		\label{tab:fusion_ABCDE}
\end{table}

We use this new set of fusion rules to reverse-engineer the corresponding $S$-matrix, as explained in Sec.~\ref{subsec:protocol}. This yields the matrix
\be
S= \frac{1}{\sqrt{12}}\left[\begin{array}{ccccc}
1 & 1 & 2 & \sqrt{3} & \sqrt{3} \\
1 & 1 & 2 & -\sqrt{3} & -\sqrt{3} \\
2 & 2 & -2 & 0 & 0 \\
\sqrt{3} & -\sqrt{3} & 0 & \sqrt{3} & -\sqrt{3} \\
\sqrt{3} & -\sqrt{3} & 0 & -\sqrt{3} & \sqrt{3} 
\end{array} \right] ,
\ee
which is identical to the $S$-matrix of $SU(2)_4$ \cite{Bonderson07}. Further, the quantum dimensions in this new theory match the original quantum dimensions for $A,B,C$, and agree with our prediction for the new, decreased dimensions of $D$ and $E$. \qed

\subsection{Forbidding the irrep $\Gamma_2$ leads to $SU(2)_4$} \label{subsec:su2_4_2}

Forbidding the 2-dimensional irrep $\Gamma_2$ will induce a similar process to that detailed in Sec.~\ref{subsec:su2_4_1}. The only difference is that the two steps detailed there have merged into a single step here:

\textit{Step 1 ---} The forbiddance of $\Gamma_2$ immediately forbids anyons $C,G,H$. Furthermore, it immediately forbids parts of $D$ and $E$, their flavors related to the 2-dimensional irrep.

The symmetry breaking/partial forbiddance of $D,E$ is similar to the one that was induced in Sec.~\ref{subsec:su2_4_1}, to the extent that the quantum dimensions of the forbidden spaces are the same in the two situations. Here, however, the partial forbiddance is not spontaneous: the forbidden flavors are those related to the $\Gamma_2$ irrep.

The set of remaining anyons is then $\{ A,B,D,E,F \}$, with $D,E$ both having modified dimensions of $\sqrt{3}$ once again. Notice, that with the exchange of $F$ to $C$ it is the same set of anyons as in Sec.~\ref{subsec:su2_4_1}. The process described here is shown in Fig.~\ref{fig:Gamma2_limit}.

The set $\{ A,B,D,E,F \}$ corresponds to $SU(2)_4$ through:
$$A \equiv J_0, \; B \equiv J_2, \; D \equiv J_{1/2}, \; E \equiv J_{3/2}, \; F \equiv J_1 $$
for the notation of anyons of $SU(2)_4$, see Sec.~\ref{subsubsec:SU(2)_4}.

\begin{figure}
\begin{centering}
\includegraphics[width=0.5\textwidth]{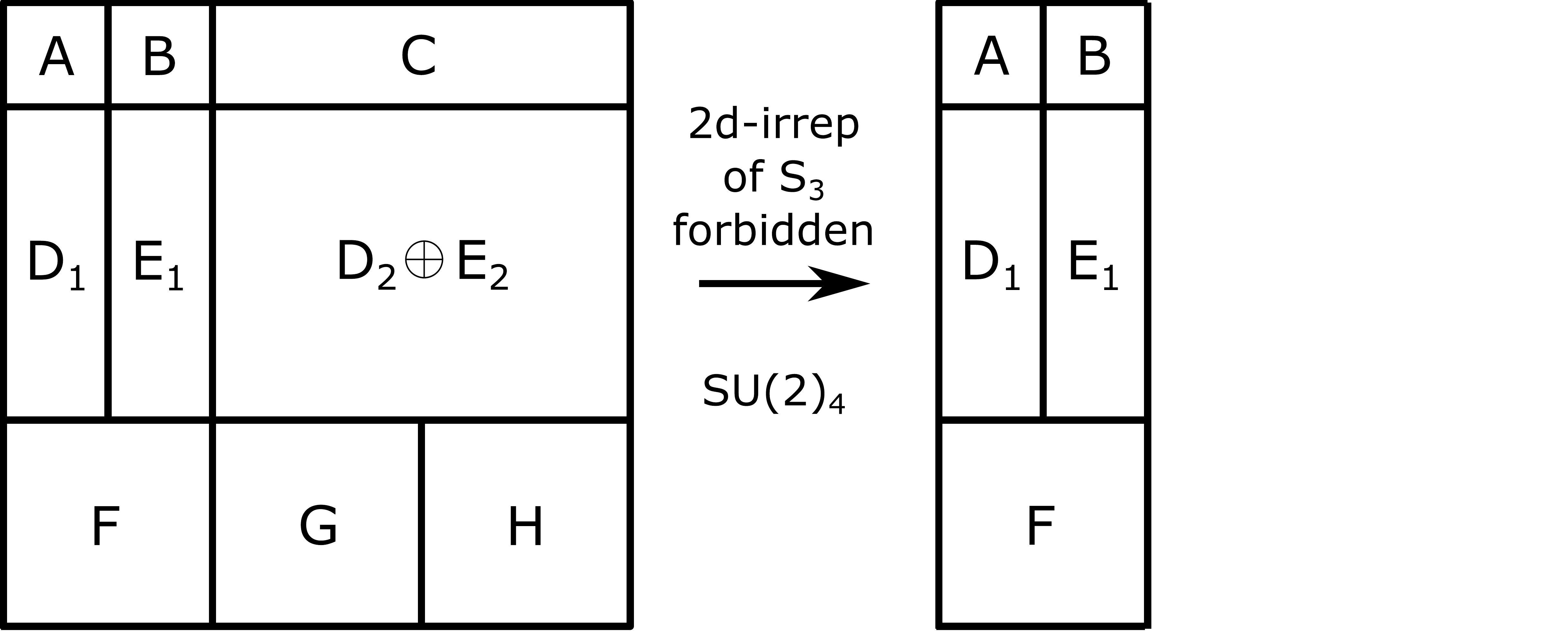}
\caption{Forbidding irrep $\Gamma_2$ of $S_3$ transforms $\mathcal{D}(S_3)$ to $SU(2)_4$.}
\label{fig:Gamma2_limit}
\end{centering}
\end{figure}

\subsubsection*{Proof}

The proof is completely identical to the proof in Sec.~\ref{subsec:su2_4_1}. Simply start by truncating the fusion rules of $\cD(S_3)$ to the current subset of anyons $\{ A,B,D,E,F \}$, then follow the same steps. \qed

\section{Lattice realization}
\label{sec:lattice_picture}

Until this point, we have been analyzing field theories: we have started with a Drinfeld double, $\mathcal{D}(G)$, enumerated its irreps (the anyons of the model), and drawn connections between those irreps, through the usual labels (conjugacy classes and irreps of the group $G$). We have introduced the concept of "forbidding a label", which leads to forbidding anyons of the theory, and analyzed what other theories emerge, when the double is $\cD(S_3)$. 

In this section, we present a way to physically realize Drinfeld doubles: Kitaev's quantum double construction \cite{Kitaev03}. We introduce the physical lattice and the corresponding Hilbert space in Sec.~\ref{subsec:Hilbert_space}, the states of which correspond to the elements of a Drinfeld double. Based on this Hilbert space we provide a high-level picture of what we meant by "forbidding labels" in Secs.~\ref{sec:phase_transitions_general}-\ref{sec:phase_diagram}. Then, in Sec.~\ref{subsec:operators_Hamiltonian} we introduce a Hamiltonian to this system, the excitations of which will be the anyons, corresponding to the irreps of the algebra $\mathcal{D}(G)$. The introduction of this Hamiltonian will allow us to give a more concrete protocol to realize the phase transitions discussed in this paper.

\subsection{Projecting out part of the Hilbert space}
\label{subsec:Hilbert_space}

Quantum doubles, introduced by Kitaev \cite{Kitaev03}, are a way to realize the excitations of a Drinfeld double of group $G$, $\mathcal{D}(G)$ in a many-body, nearest-neighbor interacting lattice system. We assign a qudit Hilbert space of dimension $|G|$ to every edge of the lattice, as well as a direction to each edge, pointing from one end of the edge towards the other. 

The lattice geometry can be chosen to have an arbitrary graph structure, in this description we will focus on a square lattice. Then, anyons of this model live on sites formed by 6 qudits: the combination of the 4 qudits of a vertex and 4 qudits of a plaquette, with 2 qudits overlapping (see Fig.~\ref{fig:star_plaq_site}). The inner structure of a site, the fact that it is formed by a vertex and a plaquette, has significance, as the chargeon part of an anyon lives on the vertex, and the fluxon part lives on the plaquette.

\begin{figure}
\begin{centering}
\includegraphics[width=0.2\textwidth]{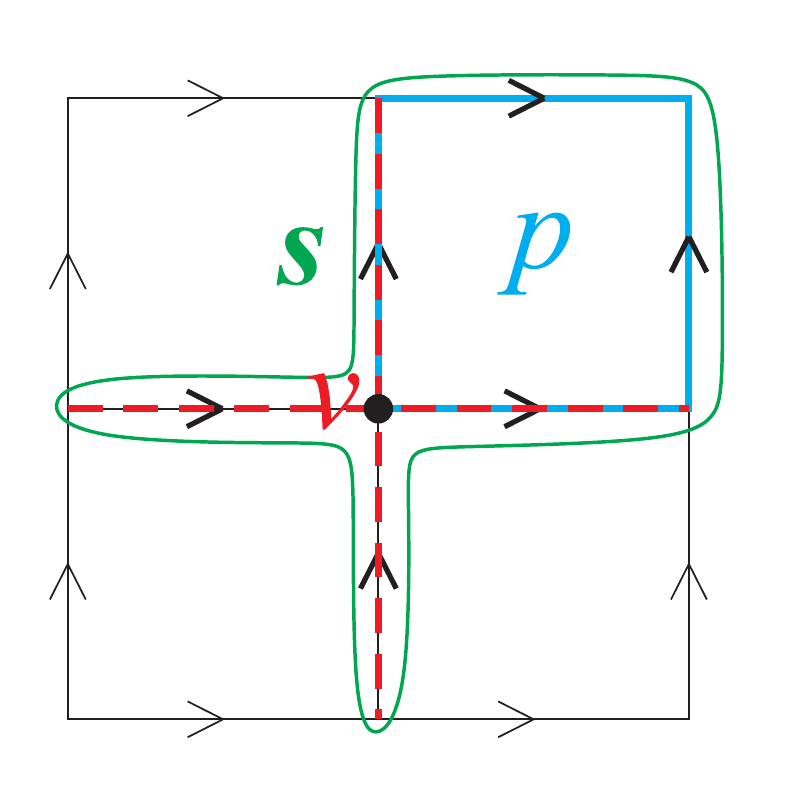}
\caption[A $6$-qudit site of a quantum double]{A site, formed by 6 qudits, is the combination of a vertex, $v$ (4 qudits) and a plaquette, $p$ (4 qudits), with 2 of the qudits overlapping.}
\label{fig:star_plaq_site}
\end{centering}
\end{figure}

Now, we're in a position to talk about the process of "forbidding", ubiquitously present in the current paper. In this lattice description, we can interpret the forbiddance of certain anyons or anyon labels by \emph{projecting out part of the Hilbert space}, the parts that contains the forbidden labels. As a result of such a complete projection, the theory will transition to a new theory, with a Hilbert space that is a subspace of the original one:
\bq \label{eq:phasetransition_projection}
&& \Pi_{\textrm{forbiddance}}: \\
&& \cH_{\cD(G)} = \textrm{span} \left\{ \bigotimes_{i \textrm{ qudits}} \ket{g_i}_i \right\}_{g_i \in G} \rightarrow  \cH_{\textrm{new}} \subset \cH_{\cD(G)}
\nonumber
\eq
where $\ket{g_i}_i$ is the state of the qudit of edge $i$.

For example, when the emergent theory is a quantum double itself:
\be
\cH_{\textrm{new}} = \textrm{span} \left\{ \bigotimes_{i \textrm{ qudits}} \ket{g_i}_i \right\}_{g_i \in H }
\ee
where $H \subset G$.

We can illustrate this point by considering the ground state of the Hamiltonian in a string-net representation \cite{LW05}. The ground state will be the sum of distinct string-net configurations, whose terms can be grouped into two separate sets: those containing strings of the forbidden anyon(s), and those that are formed only by the (still) allowed set of anyons. In this representation, the effect of the projection \eqref{eq:phasetransition_projection} on the ground state will simply correspond to removing one set of string-net terms from this sum.

Simple as it may sound, the operation \eqref{eq:phasetransition_projection} can't be realized unitarily on the system. Furthermore, it is unclear how to build such a projector from local projections unto lattice sites. While the local operations could successfully remove the forbidden excitations at all sites, they wouldn't remove states of the Hilbert space which have none of the forbidden excitations, but have condensation loops of the forbidden anyon (such as the string-net terms mentioned in the previous paragraph).

Instead, we will now introduce a way to approximate this projection. Here we will present this for $\cD (S_3)$, but all facts in Sec.~\ref{subsec:operators_Hamiltonian} can be stated for a quantum double of any general group, $\mathcal{D}(G)$ \cite{KL17}.

\subsection{Tuning a Hamiltonian}
\label{subsec:operators_Hamiltonian}

In the previous work~\cite{KL17}, we introduced two sets of orthogonal projectors unto the space of a single site of a quantum double: one set of charge projectors, and one set of flux projectors. The charge projectors act on vertices, the flux projectors act on plaquettes, while an anyon lives on the combination of the two (see Fig.~\ref{fig:star_plaq_site}). Hence, these projectors are all $4$-local.

Charge projectors correspond to irreps of the whole group ($S_3$ in this case), and the flux projectors to conjugacy classes of the group. Therefore, for $\cD(S_3)$, there are three charge projectors and three flux projectors. The two sets individually provide a full orthogonal resolution of the Hilbert space of a site. Furthermore, all those projectors commute pairwise.

Using these (or a subset of these) $4$-local projectors to construct a quantum double Hamiltonian will result in a peculiar property of the resulting model: anyons of the double won't be in one-to-one correspondence with the energy eigenspaces of this Hamiltonian \cite{KL17}. Depending on the exact \emph{flavor} of an anyon, the Hamiltonian would assign a different energy to it.

In order to obtain a fully topological model, we instead need to construct a Hamiltonian with $6$-local projectors, projecting unto all six qudits of a site. Then, we may utilize the $4$-local projectors by adding them to a quantum double with a topological ($6$-local) Hamiltonian. Tuning the couplings of the newly added projectors, we can achieve phase transitions as a limit of the original double $\cD(S_3)$.

\subsubsection{Sets of flux and charge projectors}
\label{subsubsec:4-body_projectors}

Let us first introduce $4$-local projectors unto the sites of $\cD(S_3)$. For $\mathcal{D}(S_3)$, the set of \emph{flux labels} on which the flux projectors are based are the conjugacy classes of $S_3$ (Eqs.~\eqref{eq:conj_classes_1}-\eqref{eq:conj_classes_3}). Then, we can introduce the projectors $$B_{C_e}, \; B_{C_x}, \; B_{C_y}$$ acting on plaquettes of the model. This set of projectors spans the plaquette (flux) space of a site, the flux labels $C_e$, $C_x$, and $C_y$ provide an orthogonal basis for all flux flavors. The rank of these projectors follow the rank of the conjugacy classes they are based on: $1$, $3$ and $2$, in order.

The images of these projectors are straightforward:
\bq \label{eq:image_B_Ce}
\Im (B_{C_e}) &=& A \oplus B \oplus C \\
\Im (B_{C_x}) &=& D \oplus E \\
\Im (B_{C_y}) &=& F \oplus G \oplus H \label{eq:image_B_Cy}
\eq
where the anyon labels $A$--$H$ denote the subspace of a site corresponding to having that anyon present at that site. (See Table~\ref{tab:D(S3)_anyons} for the list of anyons of $\cD(S_3)$.)

The set of \emph{charge labels} on which the charge projectors are based are the irreps of $S_3$ (Table~\ref{tab:irreps_of_S3}). Then, we can introduce the three projectors $$A_{\Gamma_1}, \; A_{\Gamma_{-1}}, \; A_{\Gamma_2}$$ acting on vertices of the model. They are of rank $1$, $1$ and $4$, in order, following the (square) dimensionality of the irreps they are based on. These three projectors span the vertex (charge) space of a site, and the charge labels $\Gamma_1^{S_3}$, $\Gamma_{-1}^{S_3}$ and $\Gamma_2^{S_3}$ provide an orthogonal basis for all charge flavors. This resolution of the Hilbert space of a site is complementary to the resolution provided by the three flux projectors. 

For non-Abelian doubles, the images of these projectors are less trivial than those of the flux projectors. For $\cD(S_3)$ they are:
\bq \label{eq:image_A_Gamma1}
\Im (A_{\Gamma_1}) &=& A \oplus D_1 \oplus F_1 \\
\Im (A_{\Gamma_{-1}}) &=& B \oplus E_1 \oplus F_2 \\
\Im (A_{\Gamma_2}) &=& C \oplus D_2 \oplus E_2 \oplus G \oplus H \label{eq:image_A_Gamma2}
\eq
where anyon labels with subscripts (e.g. $D_1$, $E_1$, $F_2$) denote orthogonal \emph{flavors} of anyons on a site \cite{KL17}.

Equations \eqref{eq:image_B_Ce}-\eqref{eq:image_B_Cy} and \eqref{eq:image_A_Gamma1}-\eqref{eq:image_A_Gamma2} are summarized in Fig.~\ref{fig:anyon-splitting-2}.

\begin{figure}
\begin{centering}
\includegraphics[width=0.4\textwidth]{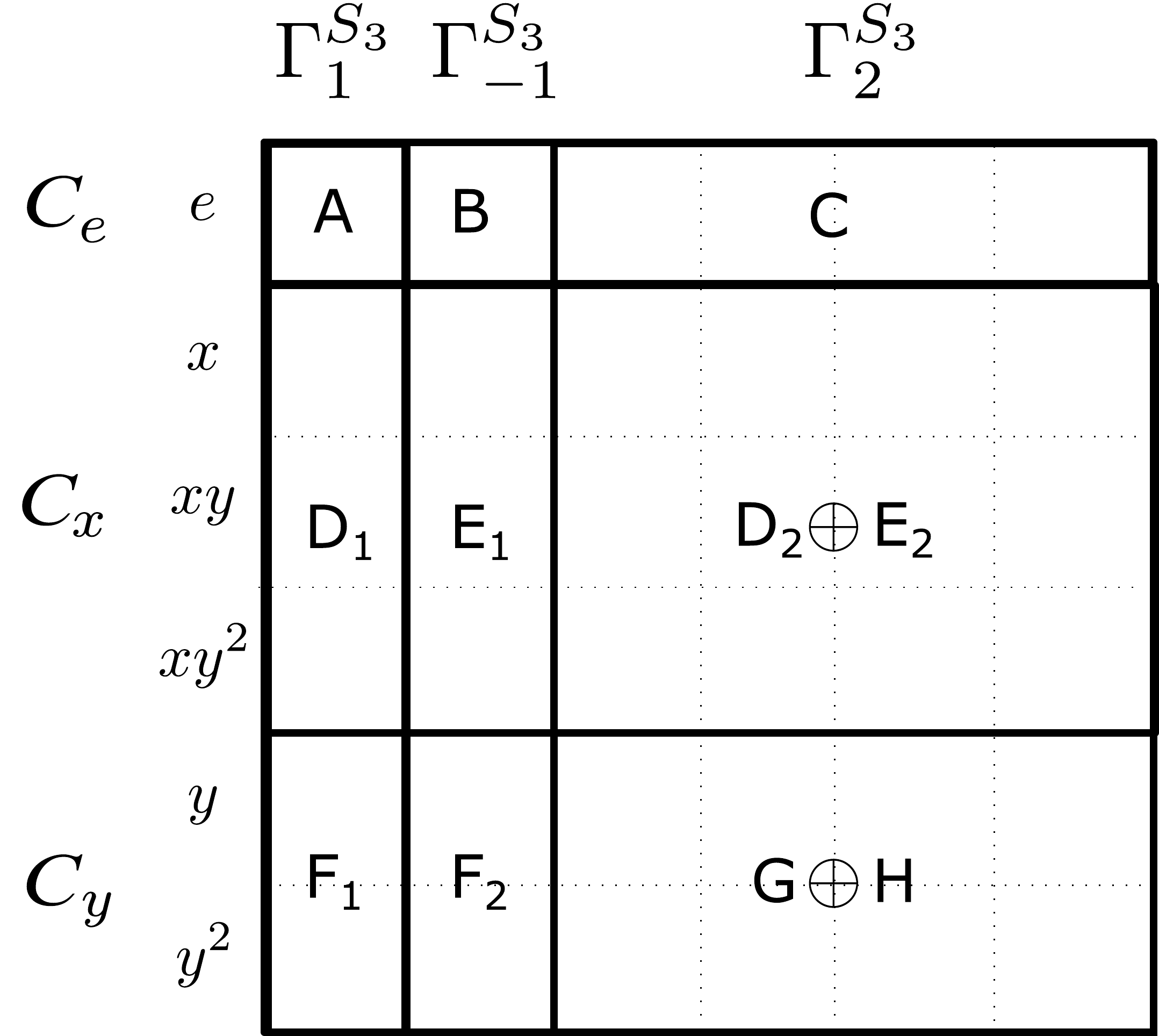}
\caption[Anyon splitting diagram, based on Fig.~\ref{fig:orth_resolution}, with labels for the different charge flavors of anyons]{Anyon splitting diagram, based on Fig.~\ref{fig:orth_resolution}. Here we added labels for the different charge flavors of anyons ($D_1$, $D_2$, $E_1$, $E_2$, and $F_1$, $F_2$) which are differentiated by the $4$-local charge projectors \eqref{eq:image_A_Gamma1}-\eqref{eq:image_A_Gamma2}.}
\label{fig:anyon-splitting-2}
\end{centering}
\end{figure}

It is key throughout these sections that (some of) the $4$-local projectors presented here are \emph{not topological}. Even though they form an orthogonal and commuting set, in their images certain anyons will split, depending on the flavor of the anyon. Physically, this means that adding such a $4$-local projector to the Hamiltonian of $\cD(S_3)$ will result in the new Hamiltonian distinguishing between flavors of the same anyon, i.e. yielding different energies based on the exact inner state of the anyon \cite{KL17}.

For example, anyon $D$ has two flavors: $D_1$ and $D_2$. While $D_1$ is in the image of $A_{\Gamma_1}$ (Eq.~\eqref{eq:image_A_Gamma1}), thus it is a chargeon, the other flavor $D_2$ is in the image of the $A_{\Gamma_2}$ projector (orthogonal to $A_{\Gamma_1}$) therefore it is should be considered a dyon (Eq.~\eqref{eq:image_A_Gamma2}). The standard Hamiltonian introduced in Ref.~\cite{Kitaev03}
\be \label{eq:4body_H}
H = - \sum_v A^{(v)}_{\Gamma_1} - \sum_p B^{(p)}_{C_e}
\ee
assigns energy $-1$ to the $D_1$ flavor, and energy $0$ to the $D_2$ flavor of anyon $D$.

In contrast, the purely topological Hamiltonian is $6$-local:
\be \label{eq:H_6body}
H = - \sum_{s=(v,p)} A^{(v)}_{\Gamma_1} B^{(p)}_{C_e}.
\ee
Notice, how the only difference from the \eqref{eq:4body_H} Hamiltonian is that here we multiply the $4$-local vertex and plaquette projectors to form a $6$-local projector unto sites. (This is simple for this case, when we only wish to project out the vacuum, and is less trivial for other, non-vacuum states, see Ref.~\cite{KL17}.)

The \eqref{eq:H_6body} Hamiltonian only distinguishes the single vacuum state by giving it $-1$ energy, while treating all states of excitations equally: it simply assigns energy $0$ to all (non-vacuum) anyons.

\subsubsection{Adding $4$-local terms to the Hamiltonian}

Let's take the fully topological Hamiltonian of $\cD(S_3)$, Eq.~\eqref{eq:H_6body}, which doesn't distinguish between flavors of the same anyon. Now, we may add the $4$-local terms introduced in in Sec.~\ref{subsubsec:4-body_projectors}, they all commute with both the vacuum projector and each other \cite{KL17}:
\bq \label{eq:Hamiltonian}
H &=& - \sum_{s=(v,p)} A^{(v)}_{\Gamma_1} B^{(p)}_{C_e} \\
&&+ \sum_v \left( \beta A^{(v)}_{\Gamma_{-1}} + \gamma A^{(v)}_{\Gamma_2} \right) + \sum_p \left( \epsilon B^{(p)}_{C_x} + \nu B^{(p)}_{C_y} \right) .
\nonumber
\eq
We could also add $4$-local terms for $A_{\Gamma_1}$ and $B_{C_e}$, but as shortly demonstrated, it is not necessary for our purposes.

The prefactors in front of each projector are tunable energy parameters (the notation follows that of Ref.~\cite{KL17}). As long as none of the parameters are decreased below $-0.5$ (more precisely, no combination of charge and flux projectors together are below $-1$), the ground space of this Hamiltonian coincides with the vacuum state of $\cD(S_3)$. Thus, all anyons of the Drinfeld double are excitations of the Hamiltonian \eqref{eq:Hamiltonian}.

\subsubsection{Energy-suppression of conjugacy classes or irreps} \label{subsubsec:tuning_parameters}

Engineering the Hamiltonian \eqref{eq:Hamiltonian}, and then tuning some of its parameters to be very large, will practically forbid anyons and anyon flavors related to the projector (or label of) we are tuning. In an environment with finite temperature, thermal processes that cost energy $E \gg k_B T$ will be suppressed. Thus, processes leading to the creation of certain anyons would cost too much energy, and those processes become thermally forbidden. This procedure is a way to physically realize the phase transitions discussed in Sec.~\ref{sec:phase_diagram}.

For example, tuning the parameter $\epsilon \rightarrow \infty$ in the Hamiltonian \eqref{eq:Hamiltonian} will result in the conjugacy class $C_x$ becoming forbidden in the theory, and lead to the analysis presented in Sec.~\ref{subsubsec:DZ3}. Another example is when we tune the parameter $\gamma \rightarrow \infty$, which will lead to the forbiddance of the irrep $\Gamma_2$, analyzed in Sec.~\ref{subsec:su2_4_2}. In short, tuning any combination of the parameters in \eqref{eq:Hamiltonian} will lead to the (joint) forbiddance of labels, as investigated in Sec.~\ref{sec:phase_diagram} and in Appendix~\ref{sec:all_theories}.

However, it is key to understand that in order to induce the phase transition, we need to \emph{completely project out} certain parts of the Hilbert space, as discussed in Sec.~\ref{subsec:Hilbert_space}. This will only happen when the tuned parameter reaches infinity. As an example, the Hamiltonian for which $\epsilon\gg1$ is not the same Hamiltonian as $\epsilon \to \infty$. Consider the \eqref{eq:Hamiltonian} Hamiltonian along the path $\{\beta=\gamma=\nu=0,\epsilon\}$ in which $\epsilon$ increases starting from 0. For $\epsilon=0$, the Hamiltonian is the fully topological \eqref{eq:H_6body} and by definition in the $\mathcal{D}(S_3)$ phase. As $\epsilon$ increases, the Hamiltonian remains in the $\mathcal{D}(S_3)$ phase since it is connected to the $\epsilon=0$ topological Hamiltonian by a smooth path \emph{without closing the gap} between the ground space and first-excited states. However, the \eqref{eq:Hamiltonian} Hamiltonian for which $\epsilon\to\infty$ is in the $\mathcal{D}(\mathbb{Z}_3)$ phase, as shown in Sec.~\ref{subsubsec:DZ3}. In other words, we expect the \eqref{eq:Hamiltonian} Hamiltonian $H(\epsilon\to\infty)$ to be renormalized to the topological Hamiltonian of $\mathcal{D}(\mathbb{Z}_3)$ by an RG flow. This discussion highlights the fact that it is possible to escape from a quantum phase by taking a limit, which makes a quantum phase an open set in the mathematical sense. Our picture of the phase transition is represented on Fig.~\ref{fig:Hamiltonian-path}. 

\begin{figure}
\begin{centering}
\includegraphics[width=0.8\columnwidth]{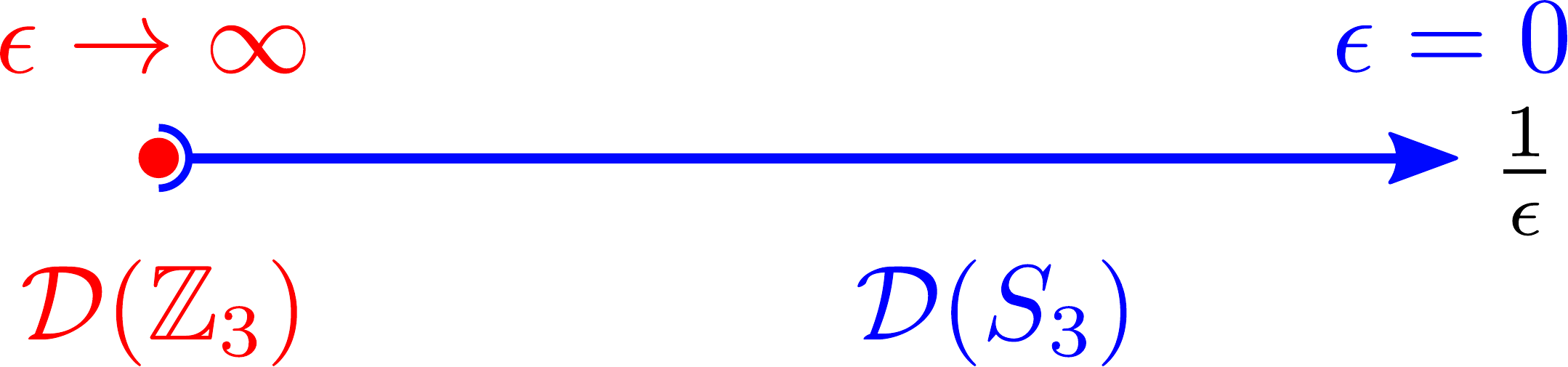}
\caption[The quantum phases along the path $\{\beta=\gamma=\nu=0,\epsilon\}$]{The quantum phases along the path $\{\beta=\gamma=\nu=0,\epsilon\}$. $\epsilon =0$ is in the topological Hamiltonian of the $\mathcal{D}(S_3)$ state. $\epsilon \to \infty$ is the topological Hamiltonian of the quantum phase $\mathcal{D}(\mathbb{Z}_3)$. Note that every point corresponds to a frustration-free local Hamiltonian with commuting terms. Aside from the point $\epsilon \to \infty$, all other Hamiltonians are in the $\mathcal{D}(S_3)$ phase.}
\label{fig:Hamiltonian-path}
\end{centering}
\end{figure}

\section{Discussion}
\label{sec:conclusions}

In this paper we introduced a framework and tools to describe quantum phase transitions in a family of topological field theories: quantum doubles. The physical processes we found can be described using terms borrowed from Landau's theory of phase transitions. We characterized phase transitions using anyonic data: fusion rules, quantum dimensions, and the $S$-matrix of a theory. However, this information together can not always distinguish all distinct theories, and a point of improvement could be to incorporate the $T$-matrix into our protocol.

A similar analysis can be done for any quantum double, $\cD(G)$. As quantum doubles of (Abelian) cyclic groups, $\cD(\mathbb{Z}_d)$ are the simple juxtaposition of $\mathbb{Z}_d$ charges and $\mathbb{Z}_d$ fluxes, with all anyons having quantum dimension $1$, the theories emerging there will always be quantum doubles of subgroups. Both in these cases and for the non-Abelian $\cD(S_3)$, the number of anyons and their quantum dimensions only decrease or don't change, but never increase (e.g. due to two anyons combining). It is an interesting question whether for more complex Abelian doubles (e.g. for $\cD(\mathbb{Z}_d \times \mathbb{Z}_k)$) this would still hold, or would there be processes leading to the (non-trivial) combination of Abelian anyons, resulting in the emergence of a non-Abelian theory \cite{WLW+08,FML+16}.

\subsection{The significance of $SU(2)_4$}

One of the theories that emerged from $\cD(S_3)$ is $SU(2)_4$, which is a non-trivial occurrence. It has recently been shown that anyons of $SU(2)_4$ can be utilized for universal quantum computation \cite{LBF+15} in that model. Our results show that it may be possible to realize $SU(2)_4$ on a lattice, by constructing $\cD(S_3)$ through a $6$-local Hamiltonian and introducing energy suppression through $4$-local terms.

While $6$-body terms are still too many, considering there exist theories with topological order with only $2$-body terms \cite{Kitaev06,BFB+11}, it is an improvement over similar proposals for universality. Levin-Wen models require $12$-body interactions \cite{LW05}, and the simplest double $\cD(A_5)$ that is proven to have universal braiding, while it requires $6$-body interactions, needs physical qudits with $60$ distinct levels \cite{OP99,Preskill98_paper}. In contrast, $\cD(S_3)$ and the emergent $SU(2)_4$ only requires $6$-level qudits for their lattice realization.

\subsection{Emerging chirality}

The emergence of chirality through $SU(2)_4$, while the original theory $\cD(S_3)$ was non-chiral, is interesting. Notice, how in both cases yielding $SU(2)_4$ (forbidding conjugacy class $C_y$, or the irrep $\Gamma_2$), a process of (spontaneous) symmetry breaking happens, before the model finally transitions into the chiral $SU(2)_4$.

It seems this partial forbiddance, together with the unique properties of anyons $D$ and $E$ in the original $\cD(S_3)$ theory, are what cause the emergence of chirality. Anyons $D$, $E$ are peculiar in how their charge flavors split: recall that while the normalizer of $x$ is isomorphic to the normalizers of $xy$ and of $xy^2$, they are not identical. While the irreps of each of those normalizers will split between the irreps of $S_3$ (e.g. between $\Gamma_1^{S_3}$ and $\Gamma_2^{S_3}$ for anyon $D$), the manner how they split is different for each normalizer, thus for each flux label considered. This causes anyons $D$ and $E$ to react non-trivially to the forbiddances discussed in this work.

An illustration of a process that breaks the symmetry, after the conjugacy class $C_y$ disappears, is the following counterclockwise and clockwise braiding of flux labels $x$ and $y$:
\bq
R \ket{x,y} &=& \ket{xyx^{-1},x} = \ket{y^2,x} , \\
R^{-1} \ket{x,y} &=& \ket{y,y^{-1}xy} = \ket{y,xy^2} ,
\eq 
where both of these exchanges only switch flux flavors, thus do nothing on the anyon-level as long as all the elements in the $C_x$ and $C_y$ conjugacy class (all the flavors) are present in the theory. However, as soon as we forbid the conjugacy class $C_y$, we break the symmetry of the $C_x$ conjugacy class: two of the flavors (say $xy$ and $xy^2$) become forbidden, while only one flavor continues to live in the theory ($x$). Thus, in the exchange scenario above, the second (clockwise) exchange would lead to an unphysical state, as $xy^2$ is forbidden, while the first (counterclockwise) exchange doesn't change the flavor $x$, and it can easily go through. Thus, we broke the exchange symmetry in this model.

This argument doesn't provide a full explanation of how chirality emerges, especially as one of the flux labels used ($y$) in the argument is forbidden in the new theory. It only provides one example of breaking the symmetry of a fundamental process. We leave it to future work to investigate how exactly chirality emerges as a result of the properties discussed here.

\subsection{Difference between Projection and Energy Suppression}

In Sec.~\ref{sec:lattice_picture} we discussed how phase transitions could be realized through projecting out part of a Hilbert space vs. using a modified Hamiltonian to suppress anyons through their energy couplings. One difference between these two methods concerns the ground space of the new theory.

The projection operator will clearly change the ground space, therefore the new theory will have the correct ground state. Meanwhile, tuning the parameters of a commuting Hamiltonian, as done during the energy suppression, won't change the ground space of the model. This is especially interesting in light of Ref.~\cite{WN90}, which establishes a connection between anyons of a model and the ground-space degeneracy, thus it hints at the fact that a different set of anyons should result in a changed ground space.

We refer to our analysis of the process of energetic suppression in Sec.~\ref{subsubsec:tuning_parameters}, where we argued that energy suppression will move a theory towards a phase transition, but won't necessarily take it over the critical point. Thus, the fact that the ground space throughout this procedure remains unchanged, is in agreement with our understanding of the distinction between energetic suppression vs. projection of the Hilbert space.

Another consequence of the subtle difference between projection and energy suppression is the following. In certain sections the reader might expect our analysis to yield a different theory than the one presented, for example, in Sec.~\ref{subsec:Z3_C}. There we forbid all non-trivial flux labels, and as a result we would expect that the pure chargeon theory of $\cD(S_3)$ will arise: the $\Phi$--$\Lambda$ theory \cite{WBI+14}. Following our analysis, instead, leads us to the emergence of $\mathbb{Z}_3$. Why is this?

It turns out that in some of the cases investigated in this paper, performing the projection on the Hilbert space vs. energy suppression will result in completely different theories. In the above example, energetically suppressing all anyons with non-trivial fluxes, through the Hamiltonian, will result in the $\Phi$--$\Lambda$ theory, living in the complete Hilbert space of $\cD(S_3)$. However, when we \emph{project out} parts of the Hilbert space related to the forbidden anyons, we will transition into a theory that needs to live in this new subspace of a Hilbert space. The $\Phi$--$\Lambda$ theory needs support on the full Hilbert space, and it can't manifest in this subspace. Therefore, the theory after the projection will transition into $\mathbb{Z}_3$ instead, for which the new Hilbert space is a sufficient support. This is reminiscent of two-dimensional topological theories that can only live on boundaries of three-dimensional theories.

\section{Acknowledgments}

We thank John Preskill, Alexei Kitaev, David Aasen, Sujeet Shukla, and Burak Sahinoglu for helpful discussions. We acknowledge funding provided by the Institute for Quantum Information and Matter, an NSF Physics Frontiers Center (NSF Grant PHY-1125565) with support of the Gordon and Betty Moore Foundation (GBMF-2644). OLC is partially supported by the Natural Sciences and Engineering Research Council of Canada (NSERC).

\appendix

\section{Additional details of quantum doubles}
\label{sec:fusion_Smatrix_DS3}

\subsection{$S$-matrix of $\mathcal{D}(\mathbb{Z}_{3})$}

The $S$-matrix of the theory is \cite{Bonderson07}:
\be \label{eq:DZ3_Smatrix}
S= \frac{1}{3}\left[\begin{array}{ccccccccc}
1 & 1 & 1 & 1 & 1 & 1 & 1 & 1 & 1 \\
1 & 1 & 1 & \omega & \bar{\omega} & \omega & \omega & \bar{\omega} & \bar{\omega} \\
1 & 1 & 1 & \bar{\omega} & \omega & \bar{\omega} & \bar{\omega} & \omega & \omega \\
1 & \bar{\omega} & \omega & 1 & 1 & \bar{\omega} & \omega & \bar{\omega} & \omega \\
1 & \omega & \bar{\omega} & 1 & 1 & \omega & \bar{\omega} & \omega & \bar{\omega} \\
1 & \bar{\omega} & \omega & \omega & \bar{\omega} & 1 & \bar{\omega} & \omega & 1 \\
1 & \bar{\omega} & \omega & \bar{\omega} & \omega & \omega & 1 & 1 & \bar{\omega} \\
1 & \omega & \bar{\omega} & \omega & \bar{\omega} & \bar{\omega} & 1 & 1 & \omega \\ 
1 & \omega & \bar{\omega} & \bar{\omega} & \omega & 1 & \omega & \bar{\omega} & 1
\end{array} \right] ,
\ee
where rows and columns correspond to the elements in the following order:
$$\{ 1,e_1,e_2,m_1,m_2,e_1 m_1, e_2 m_1, e_1 m_2, e_2 m_2 \}$$

\subsection{Fusion rules and $S$-matrix of $\mathcal{D}(S_{3})$}

The fusion rules of the 8 anyons of $\mathcal{D}(S_3)$ are shown in Table~\ref{tab:fusion_D(S3)}.
\begin{table*}[t]
	\centering
		\begin{tabular}{ |c||c|c|c|c|c|c|c|c| } 
			\hline
			 & A & B & C & D & E & F & G & H \\
			\hline
			\hline
			A & A & B & C & D & E & F & G & H \\ 
			\hline
			B & B & A & C & E & D & F & G & H \\ 
			\hline
			C & C & C & A $\oplus$ B $\oplus$ C & D $\oplus$ E & D $\oplus$ E & G $\oplus$ H & F $\oplus$ H & F $\oplus$ G \\
			\hline
			D & D & E & D $\oplus$ E & A $\oplus$ C $\oplus$ F $\oplus$ G $\oplus$ H & B $\oplus$ C $\oplus$ F $\oplus$ G $\oplus$ H & D $\oplus$ E & D $\oplus$ E & D $\oplus$ E \\
			\hline
			E & E & D & D $\oplus$ E & B $\oplus$ C $\oplus$ F $\oplus$ G $\oplus$ H & A $\oplus$ C $\oplus$ F $\oplus$ G $\oplus$ H & D $\oplus$ E & D $\oplus$ E & D $\oplus$ E \\
			\hline
			F & F & F & G $\oplus$ H & D $\oplus$ E & D $\oplus$ E & A $\oplus$ B $\oplus$ F & H $\oplus$ C & G $\oplus$ C \\
			\hline
			G & G & G & F $\oplus$ H & D $\oplus$ E & D $\oplus$ E & H $\oplus$ C & A $\oplus$ B $\oplus$ G & F $\oplus$ C \\
			\hline
			H & H & H & F $\oplus$ G & D $\oplus$ E & D $\oplus$ E & G $\oplus$ C & F $\oplus$ C & A $\oplus$ B $\oplus$ H \\
			\hline
		\end{tabular}
		\caption{Fusion rules of anyons in $\mathcal{D}(S_3).$}
		\label{tab:fusion_D(S3)}
\end{table*}

The $S$-matrix of the theory \cite{BSW11}:
\be \label{eq:D(S3)_Smatrix}
S= \frac{1}{6}\left[\begin{array}{cccccccc}
1 & 1 & 2 & 3 & 3 & 2 & 2 & 2 \\
1 & 1 & 2 & -3 & -3 & 2 & 2 & 2 \\
2 & 2 & 4 & 0 & 0 & -2 & -2 & -2 \\
3 & -3 & 0 & 3 & -3 & 0 & 0 & 0 \\
3 & -3 & 0 & -3 & 3 & 0 & 0 & 0 \\
2 & 2 & -2 & 0 & 0 & 4 & -2 & -2 \\
2 & 2 & -2 & 0 & 0 & -2 & 4 & -2 \\
2 & 2 & -2 & 0 & 0 & -2 & -2 & 4 
\end{array} \right] .
\ee

\section{Phases based on subgroups of $S_{3}$}
\label{sec:all_theories}

There are several ways we can forbid a flux or charge label, or a set of labels in the theory $\mathcal{D}(S_3)$. In the main text we presented three of these cases, here we list and present all possible emerging theories, with full proofs and derivations. Most of the emergent theories are quantum doubles of a subgroup of $S_3$, like $\mathcal{D}(\mathbb{Z}_3)$ in Sec.~\ref{subsubsec:DZ3}, or the charge/flux sector of one. The only exception is $SU(2)_4$, presented in Secs.~\ref{subsec:su2_4_1}-\ref{subsec:su2_4_2} of the main text.

\subsection{Forbidding $C_y$ and $\Gamma_2$ jointly leads to $\mathcal{D}(\mathbb{Z}_{2})$}
\label{subsec:D(Z2)}

Forbidding both the conjugacy class $C_y$ and the 2-dimensional irrep $\Gamma_2$ will lead to the theory $\mathcal{D}(\mathbb{Z}_2)$. For an overview of this theory, see Sec.~\ref{subsubsec:toric_code}.

The process $\mathcal{D}(S_3)$ undergoes to form $\mathcal{D}(\mathbb{Z}_2)$ is the following.

\textit{Step 1 ---} Anyons $C,F,G,H$ become forbidden and $D,E$ become partially forbidden (their charge flavors related to $\Gamma_2$ are removed).

\textit{Step 2 ---} A (further) partial forbiddance of $D$ and $E$ is induced. This time it is a spontaneous symmetry breaking of their flux flavors: only one of the flavors $x$, $xy$ or $xy^2$ will remain. (This is the same process when flux flavors of $D,E$ became forbidden after forbidding the $C_y$ conjugacy class only, in Sec.~\ref{subsec:su2_4_1}. For more details, please refer to that section.)

At the end, the remaining set of anyons is: $\{ A,B,D,E \}$, where $D$ and $E$ has decreased dimensions $1$, due to the series of partial forbiddances these anyons underwent. These anyons correspond to the anyons of $\mathcal{D}(\mathbb{Z}_2)$ with
\bq
\nonumber
A &=& 1 \\
\nonumber
B &=& \mathbf{e} , \\
\nonumber
D &=& \mathbf{m} , \\
\nonumber
E &=& \mathbf{em} .
\eq
For an overview of these physical processes, see Fig.~\ref{fig:Cy_Gamma2_limit}.

\begin{figure}
\begin{centering}
\includegraphics[width=0.5\textwidth]{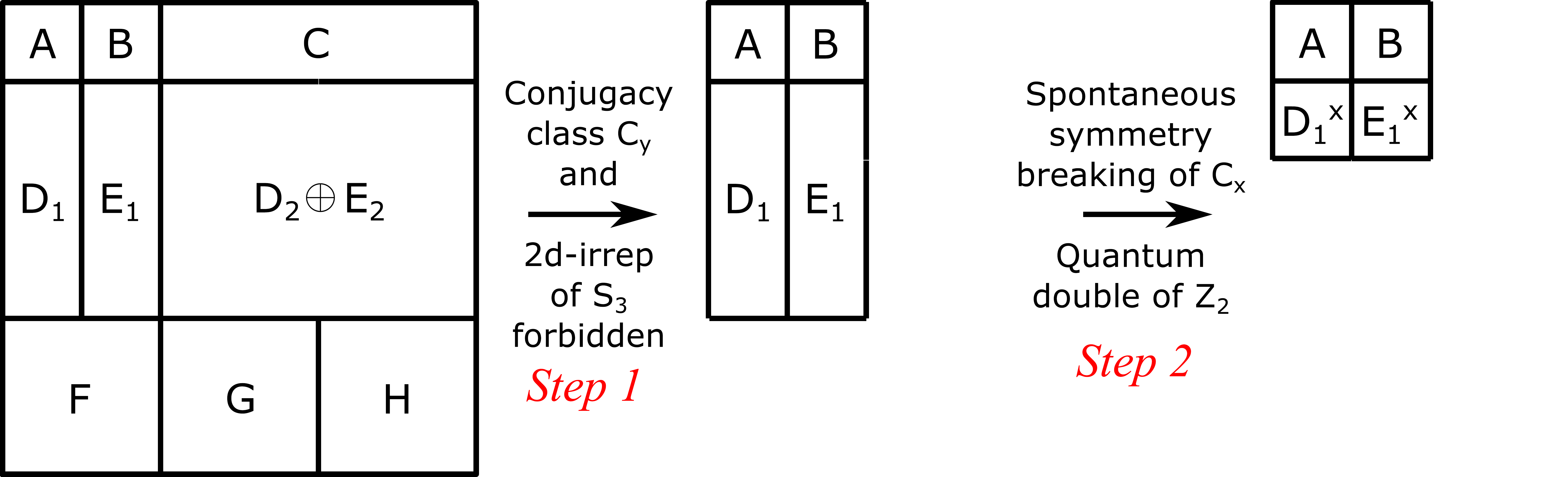}
\caption[Pictorial representation of the process transforming $\mathcal{D}(S_3)$ to $\mathcal{D}(\mathbb{Z}_2)$]{Pictorial representation of the process transforming $\mathcal{D}(S_3)$ to $\mathcal{D}(\mathbb{Z}_2)$. The superscripts $x$ denote the components of anyons $D$ and $E$ after the symmetry breaking.}
\label{fig:Cy_Gamma2_limit}
\end{centering}
\end{figure}

\subsubsection*{Proof}

We can prove this correspondence by following the protocol detailed in Sec.~\ref{subsec:protocol}. We need to start by truncating the fusion rules to the remaining set of anyons $\{ A,B,D,E \}$, then using these to reconstruct the new $S$-matrix. This will yield the $S$-matrix of $\mathcal{D}(\mathbb{Z}_2)$:
\be
S=\frac{1}{2}\left[\begin{array}{cccc}
1 & 1 & 1 & 1 \\
1 & 1 & -1 & -1 \\
1 & -1 & 1 & -1 \\
1 & -1 & -1 & 1
\end{array} \right] .
\ee

Arriving at this matrix (rather than, e.g. the $S$-matrix of the double semion model) is non-trivial, as such symmetries could be lost during our protocol. We'd like to remind the reader that when equivalent $S$-matrices emerge (as in, they produce the same fusion rules), we make our choice based on the symmetries of the $S$-matrix of the original $\mathcal{D}(S_3)$, which is what we did here. \qed

\subsection{Forbidding $C_x$, $C_y$, $\Gamma_2$ jointly leads to $\mathbb{Z}_2$}

It is trivial to see this transition. Anyons $C,D,E,F,G,H$ will become forbidden, leaving the set $\{ A,B \}$. Clearly, $B$ will act as a single $\mathbb{Z}_2$ charge in this model. For a pictorial argument, see Fig.~\ref{fig:Cy_Gamma2_Cx_limit}.

\begin{figure}
\begin{centering}
\includegraphics[width=0.5\textwidth]{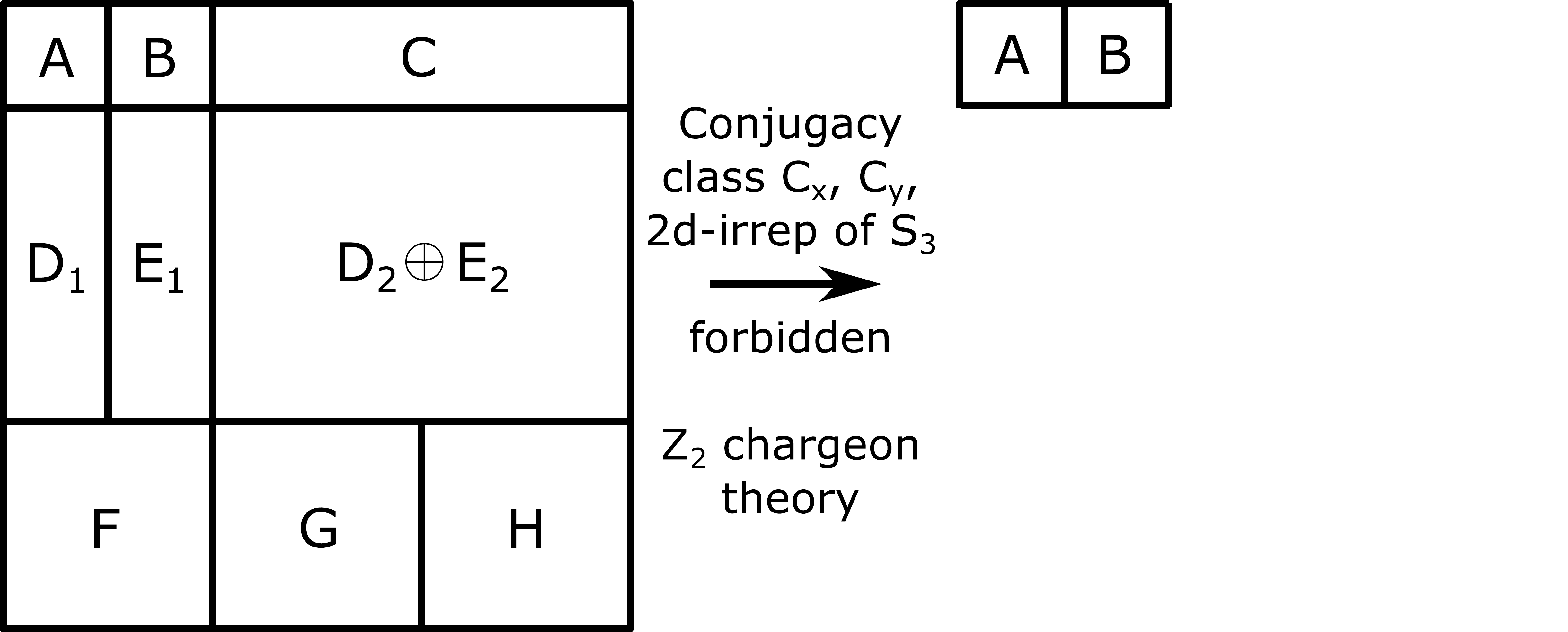}
\caption{Forbidding conjugacy classes $C_x$ and $C_y$, and irrep $\Gamma_2$ of $S_3$ transforms $\mathcal{D}(S_3)$ to $\mathbb{Z}_2$.}
\label{fig:Cy_Gamma2_Cx_limit}
\end{centering}
\end{figure}

\subsection{Forbidding $C_y$, $\Gamma_{-1}$, $\Gamma_2$ jointly leads to $\mathbb{Z}_2$}

\textit{Step 1 ---} This leads to the forbiddance of anyons $B,C,E,F,G,H$ and the partial forbiddance of $D$ (the charge flavor related to $\Gamma_2$ becomes forbidden).

\textit{Step 2 ---} The forbiddance of conjugacy class $C_y$ induces the partial forbiddance of anyon $D$ (this is a spontaneous symmetry breaking, for more details refer to the similar process in Sec.~\ref{subsec:su2_4_1}).

The set $\{ A,D \}$ remains, where the quantum dimension of $D$ has been decreased to $1$ due to the series of partial forbiddances the anyon underwent. For an overview of this process, see Fig.~\ref{fig:Cy_Gamma2_Gamma-1_limit}.

\begin{figure}
\begin{centering}
\includegraphics[width=0.5\textwidth]{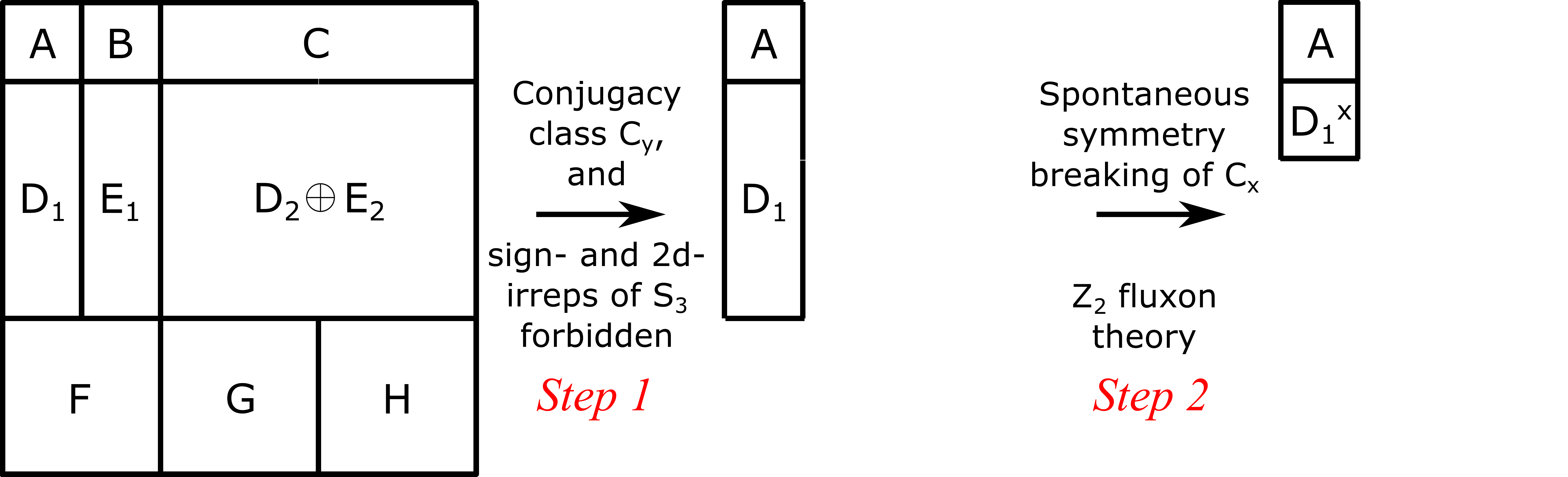}
\caption[Forbidding conjugacy class $C_y$, and irreps $\Gamma_{-1}$ and $\Gamma_2$ of $S_3$ transforms $\mathcal{D}(S_3)$ to $\mathbb{Z}_2$]{Forbidding conjugacy class $C_y$, and irreps $\Gamma_{-1}$ and $\Gamma_2$ of $S_3$ transforms $\mathcal{D}(S_3)$ to $\mathbb{Z}_2$. The superscript $x$ denotes the component of anyon $D$ after the symmetry breaking.}
\label{fig:Cy_Gamma2_Gamma-1_limit}
\end{centering}
\end{figure}

\subsubsection*{Proof}

This process can be proven by tying it to the proof of the emergent $\mathcal{D}(\mathbb{Z}_2)$ theory in Sec.~\ref{subsec:D(Z2)}. It's clear that if forbidding the labels $C_y$ and $\Gamma_2$ together led to the set of anyons $\{ A,B,D,E \}$ forming $\mathcal{D}(\mathbb{Z}_2)$, with the correct dimensions of $1$, then additionally forbidding the irrep $\Gamma_{-1}$ forbids the nontrivial charges in the model. This leaves the fluxon sector of $\mathcal{D}(\mathbb{Z}_2)$: $\{ A,D \}$, with the correct fusion rules and quantum dimensions for $\mathbb{Z}_2$.

Alternatively, it is also possible to prove the emergence of $\mathbb{Z}_2$ in this limit by dutifully following the procedure outlined in Sec.~\ref{subsec:protocol}. \qed

\subsection{Forbidding $\Gamma_{-1}$ and $\Gamma_2$ jointly leads to $\mathbb{Z}_2$}
\label{subsec:gamma-1_gamma2}

\textit{Step 1 ---} Forbidding $\Gamma_{-1}$ and $\Gamma_2$ leads to forbiddance of anyons $B,C,E,G,H$ and to the partial forbiddance of $D$ and $F$ (only the charge flavor related to $\Gamma_1$ remains), leaving the set of anyons $\{ A,D,F \}$.

\textit{Step 2 ---} This initial forbiddance will induce the condensation of $F$.

\textit{Step 3 ---} The set of remaining anyons are thus: $\{ A',D \}$, which form the $\mathbb{Z}_2$ theory (with multiplicity). 

For a pictorial argument showing this process, please refer to Fig.~\ref{fig:Gamma-1_Gamma2_limit}.

\subsubsection*{Parts of the Proof}

One can start to prove this by following the protocol of Sec.~\ref{subsec:protocol}.

\textit{Step 1 ---} We truncate the fusion rules for the set of remaining anyons, see the new rules in Table~\ref{tab:fusion_ADF}.

\begin{table}
	\centering
		\begin{tabular}{ |c||c|c|c| } 
			\hline
			 & A & D & F \\
			\hline
			\hline
			A & A & D & F \\ 
			\hline
			D & D & A $\oplus$ F & D \\  
			\hline
			F & F & D & A $\oplus$ F \\  
			\hline
		\end{tabular}
		\caption{Fusion rules of remaining particles of $\mathcal{D}(S_3)$, after forbidding irreps $\Gamma_{-1}$ and $\Gamma_2$.}
		\label{tab:fusion_ADF}
\end{table}

Then, we use the new set of fusion rules to reverse-engineer the new $S$-matrix for this theory, and get:
\be \label{eq:Smatrix_ADF}
S \propto \left[\begin{array}{ccc}
1 & \sqrt{3} & \sqrt{2} \\
\sqrt{3} & -3 & \sqrt{6} \\
\sqrt{2} & \sqrt{6} & 2  
\end{array} \right] .
\ee

\begin{lem}[Condensation (\textit{Step 2}, \textit{Step 3})] \label{lem:condensation_ADF}
Anyon $F$ condenses to the vacuum. The resulting theory is $\mathbb{Z}_2$, with multiplicity.
\end{lem}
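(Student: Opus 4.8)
The plan is to mirror the two-stage argument used for $\mathcal{D}(\mathbb{Z}_3)$ in Lemma~\ref{lem:condensation_DZ3}, adapted to the smaller anyon set $\{A,D,F\}$. First I would establish the condensation of $F$ (\textit{Step 2}) directly from the reverse-engineered $S$-matrix \eqref{eq:Smatrix_ADF}. Reading the quantum dimensions $d_A=1$, $d_D=\sqrt3$, $d_F=\sqrt2$ off the first column and setting the total quantum dimension $\mathcal{D}=\sqrt6$, one checks that the $F$ row is exactly $\sqrt2$ times the vacuum row, i.e. $S_{aF}=\sqrt2\,S_{aA}=d_a d_F/\mathcal{D}$ for every $a\in\{A,D,F\}$. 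This is the transparency condition: $F$ braids trivially with every anyon, and since $F\times F=A\oplus F$ already contains the vacuum, $F$ is a condensable boson. By the criterion stated in Sec.~\ref{subsec:protocol}---linearly dependent $S$-matrix rows signal indistinguishability---$F$ becomes indistinguishable from $A$ and condenses to it.

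For \textit{Step 3} I would carry out the explicit change of basis that isolates the redundant direction, exactly as in the proof of Lemma~\ref{lem:condensation_DZ3}. The null vector of \eqref{eq:Smatrix_ADF} is $N\propto F-\sqrt2\,A$, and the orthogonal combination inside the $\{A,F\}$ plane, $A'\propto A+\sqrt2\,F$, plays the role of the new vacuum. Rotating the basis $\{A,D,F\}$ to $\{A',D,N\}$ annihilates the entire $N$ row and column; dropping them leaves a $2\times2$ block proportional to $\left[\begin{smallmatrix}1&1\\1&-1\end{smallmatrix}\right]$. Fixing the overall prefactor by enforcing unitarity turns this into the $\mathbb{Z}_2$ $S$-matrix $\tfrac{1}{\sqrt2}\left[\begin{smallmatrix}1&1\\1&-1\end{smallmatrix}\right]$, with $A'$ the vacuum and $D$ the nontrivial anyon; the residual rescaling of the prefactor---equivalently the collapse $D\times D=A\oplus F\to2A'$ once $F$ is identified with $A'$---is what I would record as the stated ``multiplicity.''

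I expect the main obstacle to be conceptual rather than computational: the truncated data of Table~\ref{tab:fusion_ADF} is internally inconsistent as an anyon model, since $F\times F=A\oplus F$ would force $d_F$ to satisfy $d_F^2=1+d_F$ (the golden ratio), whereas the splitting diagram assigns $d_F=\sqrt2$. This very mismatch is the fingerprint flagged in Sec.~\ref{subsec:protocol} that a self-consistent theory cannot be assembled and that a further condensation must intervene; the delicate point is to justify trusting the fusion-rule and $S$-matrix structure---together with the symmetry inherited from the parent $\mathcal{D}(S_3)$ matrix---over the nominal quantum dimensions when selecting the emergent theory. A secondary, mostly bookkeeping task is to confirm via the Verlinde formula \eqref{eq:verlinde} applied to the renormalized $2\times2$ matrix that, once the multiplicity factor is stripped, the surviving rule is $D\times D=A'$ with the relabeling $A'=1$, matching $\mathbb{Z}_2$.
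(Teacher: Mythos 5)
Your proposal is correct, and its first half coincides with the paper's own proof: both read the condensation of $F$ off the fact that the $A$ and $F$ entries of \eqref{eq:Smatrix_ADF} are not independent. The divergence is in how the passage to $\mathbb{Z}_2$ is justified. The paper explicitly \emph{declines} to carry out the full computation, stating that for a higher-dimensional anyon one cannot track the condensation and the quantum dimensions of the resulting subspaces faithfully at the same time; it instead argues physically, keeping two subspaces of joint dimension $\sqrt3$ each and invoking a spontaneous symmetry breaking of $D$ (by analogy with Secs.~\ref{subsec:su2_4_1} and~\ref{subsec:D(Z2)}) to bring both down to dimension $1$. You instead complete the computation the paper only gestures at: you sharpen the row dependence into the transparency condition $S_{aF}=d_a d_F/\mathcal{D}$ (which, together with the vacuum channel in $F\times F$, certifies $F$ as condensable), exhibit the null vector $F-\sqrt2\,A$, and rotate to $\{A'\propto A+\sqrt2\,F,\ D,\ N\}$, after which the $N$ row and column vanish and the surviving block is $3\left[\begin{smallmatrix}1&1\\1&-1\end{smallmatrix}\right]$, i.e.\ the $\mathbb{Z}_2$ $S$-matrix after unitarization. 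This is exactly the technique of Lemma~\ref{lem:condensation_DZ3}, and your unequal weights $(1,\sqrt2)$ are essential --- the equal-weight rotation used there would not produce zeros here. What your route buys is a self-contained linear-algebra proof; what it gives up is precisely what the paper's physical argument supplies, namely the dimension bookkeeping ($\sqrt3\to1$ for $D$), which you correctly replace by disavowing the inconsistent quantum dimensions, in line with the caveat in Sec.~\ref{subsec:protocol}. (Your golden-ratio observation is apt, and the inconsistency is even starker than you state: $D\times F=D$ in Table~\ref{tab:fusion_ADF} already forces $d_F=1$.) One small interpretive difference: you gloss ``with multiplicity'' as the fusion collapse $D\times D\to 2A'$, whereas the paper means the residual dimensionality of the two surviving subspaces; these are compatible readings of the same phenomenon.
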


\begin{proof}
From the $S$-matrix \eqref{eq:Smatrix_ADF} it's clear that the entries for anyons $A$ and $F$ are not independent of each other, indicating anyon $F$ will condense. This leads to the new theory having two distinct subspaces only: the space of $A$ and $F$ after condensation, having joint quantum dimension $\sqrt{3}$, and the space of anyon $D$ with dimension $\sqrt{3}$. This will be followed by a spontaneous symmetry breaking of $D$, motivated by how the loss of anyon $F$ always led to this process in previous cases (see Secs.~\ref{subsec:su2_4_1} and \ref{subsec:D(Z2)}), and we will get a theory with anyons $\{ A', D \}$, both having quantum dimension $1$, clearly the semion model.

It is possible to mathematically follow the condensation process of anyon $F$. However, due to it being a non-Abelian anyon with quantum dimension higher-than-$1$, to our knowledge it is not possible to model its condensation and keep track of the quantum dimensions of the resulting subspaces faithfully \emph{at the same time}. One can prove that $F$ will condense, by making the appropriate basis transformation on the $S$-matrix, but the information that the quantum dimensions of the resulting subspaces will match, will be lost.

However, even without the detailed mathematical argument for this condensation, the physical arguments are compelling enough to conclude that the emerging theory will be $\mathbb{Z}_2$, based on the set of anyons $\{ A',D \}$, formed after the condensation of $F$.
\end{proof}

\subsection{Forbidding $\Gamma_{-1}$ and $C_y$ jointly leads to $\mathbb{Z}_2$}

This process is very similar to the one detailed in Sec.~\ref{subsec:gamma-1_gamma2}, due to the exchange symmetry of anyons $C$ and $F$ in $\mathcal{D}(S_3)$.

\textit{Step 1 ---} Forbidding $\Gamma_{-1}$ and $C_y$ together will forbid anyons $\{ B,F,G,H \}$, and partially forbid anyon $E$.

\textit{Step 2 ---} Forbidding the irrep $\Gamma_{-1}$ induces anyon $C$ to lose its antisymmetric flavor, as well as it leads to the complete forbiddance of $E$ (all of its charge flavors are antisymmetric). This leaves the set of anyons $\{ A,C,D \}$.

Notice that due to the $C \leftrightarrow F$ interchangeability, this theory will undergo the same transitions as the one detailed in Sec.~\ref{subsec:gamma-1_gamma2}.

\textit{Step 3 ---} First, anyon $D$ will undergo a spontaneous symmetry breaking (induced by the loss of conjugacy class $C_y$).

\textit{Step 4 ---} Then anyon $C$ will condense to the vacuum.

\textit{Step 5 ---} After this, $\{ A',D \}$ will form $\mathbb{Z}_2$ (with multiplicity).

For an overview of this process, see Fig.~\ref{fig:Gamma-1_Cy_limit}.

\begin{figure}
\begin{centering}
\includegraphics[width=0.5\textwidth]{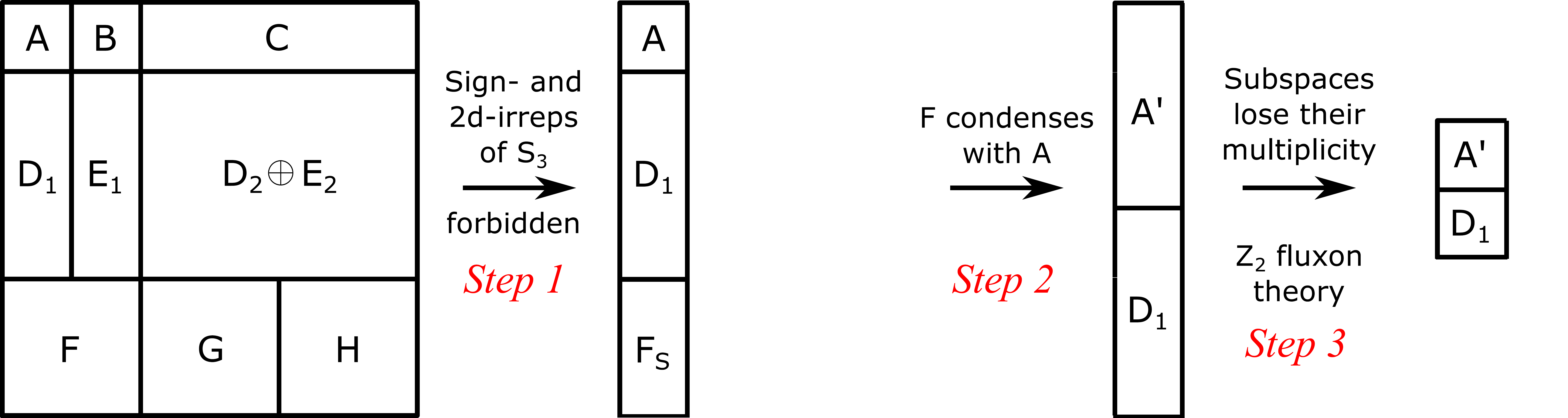}
\caption[Forbidding irreps $\Gamma_{-1}$ and $\Gamma_2$ of $S_3$ transforms $\mathcal{D}(S_3)$ to $\mathbb{Z}_2$]{Forbidding irreps $\Gamma_{-1}$ and $\Gamma_2$ of $S_3$ transforms $\mathcal{D}(S_3)$ to $\mathbb{Z}_2$. The subspace $F_S \equiv F_1$ of Fig.~\ref{fig:anyon-splitting-2}, $A'$ is the vacuum after the condensation of $F$.}
\label{fig:Gamma-1_Gamma2_limit}
\end{centering}
\end{figure}

\begin{figure*}
\begin{centering}
\includegraphics[width=\textwidth]{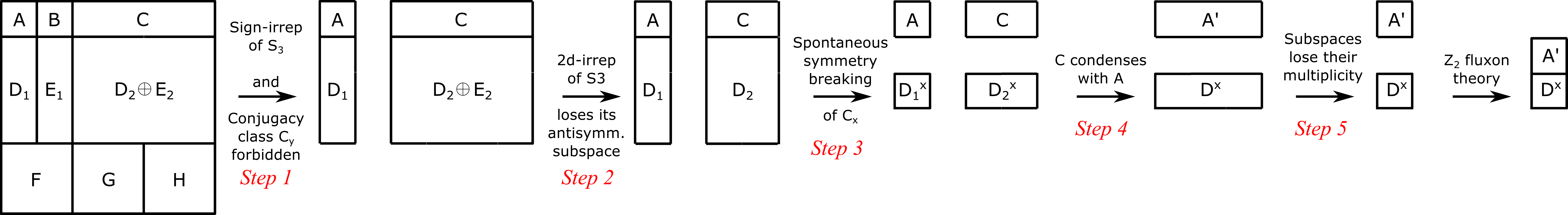}
\caption{Forbidding irrep $\Gamma_{-1}$ and conjugacy class $C_y$ of $S_3$ transforms $\mathcal{D}(S_3)$ to $\mathbb{Z}_2$. The superscript $x$ denotes the component of anyon $D$ after the symmetry breaking. $A'$ is the vacuum after the condensation of $C$.}
\label{fig:Gamma-1_Cy_limit}
\end{centering}
\end{figure*}

\subsubsection*{Proof}

Due to the $C \leftrightarrow F$ equivalence in the original $\cD(S_3)$ theory the proof is identical to the proof presented in Sec.~\ref{subsec:gamma-1_gamma2}. We need to start by truncating the original fusion rules to the new set $\{ A,C,D \}$, then follow the steps outlined there. \qed

\subsection{Forbidding $\Gamma_{-1}$ leads to $\mathbb{Z}_2$}

The double $\mathcal{D}(S_3)$ also transitions into $\mathbb{Z}_2$ when we simply forbid the irrep $\Gamma_{-1}$ alone.

\textit{Step 1 ---} $B$ becomes forbidden, and $E,F$ are partially forbidden.

\textit{Step 2 ---} The forbiddance of the irrep $\Gamma_{-1}$ additionally induces $C,G,H$ to lose their antisymmetric charge flavors, and the complete forbiddance of $E$ (as it only has antisymmetric charge flavors).

The remaining set of anyons is then: $\{ A,C,D,F,G,H \}$.

\textit{Step 3 ---} We will see that at this point anyons $C,F,G,H$ all condense to the vacuum.

\textit{Step 4 ---} We have two subspaces: $\{ A', D \}$, which will realize the $\mathbb{Z}_2$ theory.

For an overview of these processes, see Fig.~\ref{fig:Gamma-1_limit}.

\begin{figure*}
\begin{centering}
\includegraphics[width=\textwidth]{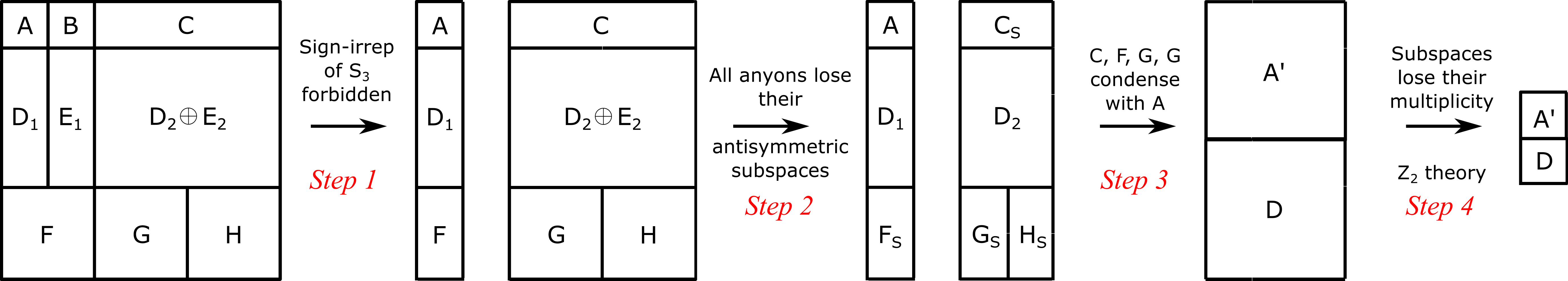}
\caption{Forbidding irrep $\Gamma_{-1}$ of $S_3$ transforms $\mathcal{D}(S_3)$ to $\mathbb{Z}_2$. The subscripts $S$ denote the symmetric subspaces of anyons ($F_S \equiv F_1$ of Fig.~\ref{fig:anyon-splitting-2}). $A'$ is the vacuum after the condensation of $C, F, G, H$.}
\label{fig:Gamma-1_limit}
\end{centering}
\end{figure*}

\subsubsection*{Parts of the Proof}

After the initial forbiddance processes, we are left with the set of anyons: $\{ A,C,D,F,G,H \}$.

\textit{Step 1 ---} Following our protocol, we truncate the fusion rules to this set, and construct the $S$-matrix. The truncated fusion rules are shown in Table~\ref{tab:fusion_ACDFGH}. 
\begin{table}
	\centering
		\begin{tabular}{ |c||c|c|c|c|c|c| } 
			\hline
			 & A & C & D & F & G & H \\
			\hline
			\hline
			A & A & C & D & F & G & H \\ 
			\hline
			C & C & A $\oplus$ C & D & G $\oplus$ H & F $\oplus$ H & F $\oplus$ G \\ 
			\hline
			D & D & D & A $\oplus$ C $\oplus$ F $\oplus$ G $\oplus$ H & D & D & D \\ 
			\hline
			F & F & G $\oplus$ H & D & A $\oplus$ F & C $\oplus$ H & C $\oplus$ G \\ 
			\hline
			G & G & F $\oplus$ H & D & C $\oplus$ H & A $\oplus$ G & C $\oplus$ F \\ 
			\hline
			H & H & F $\oplus$ G & D & C $\oplus$ G & C $\oplus$ F & A $\oplus$ H \\ 
			\hline
		\end{tabular}
		\caption{Fusion rules of remaining particles of $\mathcal{D}(S_3)$, after forbidding irrep $\Gamma_{-1}$.}
		\label{tab:fusion_ACDFGH}
\end{table}

\textit{Step 2 ---} The resulting $S$-matrix already shows us that anyons $C,F,G,H$ lost their antisymmetric subspaces:
\be
S \propto \left[\begin{array}{cccccc}
1 & \sqrt{2} & 3 & \sqrt{2} & \sqrt{2} & \sqrt{2} \\
\sqrt{2} & 2 & 3 \sqrt{2} & 2 & 2 & 2 \\
3 & 3 \sqrt{2} & -9 & 3 \sqrt{2} & 3 \sqrt{2} & 3 \sqrt{2} \\
\sqrt{2} & 2 & 3 \sqrt{2} & 2 & 2 & 2 \\
\sqrt{2} & 2 & 3 \sqrt{2} & 2 & 2 & 2 \\
\sqrt{2} & 2 & 3 \sqrt{2} & 2 & 2 & 2
\end{array} \right] .
\ee

\begin{lem}[Condensation (\textit{Step 3}, \textit{Step 4})]
Anyons $C,F,G,H$ all condense to the vacuum. The resulting theory is $\mathbb{Z}_2$, with multiplicity.
\end{lem}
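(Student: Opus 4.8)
The plan is to read the condensation straight off the $S$-matrix displayed above, in exactly the manner of Lemma~\ref{lem:condensation_DZ3} and Lemma~\ref{lem:condensation_ADF}. The key observation is that in the basis $\{A,C,D,F,G,H\}$ the rows (and, by symmetry of $S$, the columns) labeled $C,F,G,H$ are all equal to one another and equal to $\sqrt{2}$ times the vacuum row $A$, since $\sqrt{2}\,(1,\sqrt{2},3,\sqrt{2},\sqrt{2},\sqrt{2}) = (\sqrt{2},2,3\sqrt{2},2,2,2)$. By the criterion recalled in Sec.~\ref{subsec:mechanisms}, linear dependence of an anyon's $S$-matrix row with the vacuum row signals that the anyon has become indistinguishable from the vacuum; this establishes that $C,F,G,H$ all condense. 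A useful consequence is that the $6\times 6$ matrix has rank $2$: its kernel is spanned by $\{\sqrt{2}\,A - C,\; C-F,\; C-G,\; C-H\}$, leaving a two-dimensional physical subspace to be analyzed.

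First I would perform the basis change isolating this physical subspace, just as in Lemma~\ref{lem:condensation_DZ3}. The combination orthogonal to the kernel inside the condensing sector is the symmetric vector $A + \sqrt{2}(C+F+G+H)$, which I normalize to $A' = [A + \sqrt{2}(C+F+G+H)]/3$ and which becomes the new vacuum; together with $D$ it spans the surviving block, while the four kernel directions give vanishing rows and columns that are dropped. Using that each of the columns $C,F,G,H$ equals $\sqrt{2}$ times column $A$, one finds $S\,[A + \sqrt{2}(C+F+G+H)] = 9\,(\text{column }A)$, so the reduced matrix in the basis $\{A',D\}$ is proportional to $\left[\begin{smallmatrix} 1 & 1 \\ 1 & -1 \end{smallmatrix}\right]$; imposing unitarity fixes the prefactor to $1/\sqrt{2}$, reproducing exactly the $\mathbb{Z}_2$ $S$-matrix. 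This identifies the braiding and fusion structure of the surviving theory $\{A',D\}$ with that of $\mathbb{Z}_2$.

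The remaining content is the qualifier ``with multiplicity,'' and this is where I expect the main obstacle, precisely as flagged in the proof of Lemma~\ref{lem:condensation_ADF}. The condensing set $\{A,C,F,G,H\}$ contains non-Abelian anyons ($C,F,G,H$ have quantum dimension larger than one), and the $S$-matrix reduction that cleanly extracts the $\mathbb{Z}_2$ fusion rules does not simultaneously keep faithful track of the quantum dimensions of the merged vacuum $A'$ and of $D$. The multiplicity is visible directly at the level of fusion: truncating $D\times D = A\oplus C\oplus F\oplus G\oplus H$ from Table~\ref{tab:fusion_ACDFGH} and sending $C,F,G,H\to A'$ leaves several copies of the vacuum in the fusion channel rather than a single one. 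I would therefore present the $\mathbb{Z}_2$ identification rigorously through the reduced $S$-matrix, and justify the surviving set $\{A',D\}$ together with its $\mathbb{Z}_2$ fusion rules on the same physical grounds invoked in Lemma~\ref{lem:condensation_ADF}, without claiming an exact, dimension-preserving description of this non-Abelian condensation.
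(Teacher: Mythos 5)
Your proposal is correct and takes essentially the same route as the paper: condensation of $C,F,G,H$ is read off the linear dependence of their $S$-matrix rows on the vacuum row, and the surviving $\{A',D\}$ block is identified with $\mathbb{Z}_2$, with the dimension/multiplicity caveat handled exactly as in Lemma~\ref{lem:condensation_ADF}. If anything, your write-up is more explicit than the paper's, which only notes the linear dependence and defers the basis-transformation argument to Lemma~\ref{lem:condensation_ADF} (where it is described but not carried out), whereas you perform the change of basis, verify $S\,[A+\sqrt{2}(C+F+G+H)]=9\,(\text{column }A)$, and exhibit the reduced $2\times2$ matrix $\tfrac{1}{\sqrt{2}}\left[\begin{smallmatrix}1 & 1\\ 1 & -1\end{smallmatrix}\right]$; your fusion-rule reading of ``with multiplicity'' (several copies of the condensed vacuum in $D\times D$) is consistent with the paper's statement that the two surviving subspaces are each three-dimensional.
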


\begin{proof}
Notice, that the entries corresponding to anyons $A,C,F,G,H$ are identical, up to a constant, while the only linearly independent entries are those of anyon $D$. From this, we conclude that excitations $C,F,G,H$ will all condense to the vacuum, forming a $3$-dimensional subspace. Then, we are left with two $3$-dimensional subspaces: one for the vacuum after condensation, $A'$, and one for anyon $D$. These two will form a $\mathbb{Z}_2$ theory. (For a more detailed argument, see the proof of Lemma~\ref{lem:condensation_ADF}.)
\end{proof}

\subsection{Forbidding $C_x$ and $C_y$ jointly leads to $\mathbb{Z}_3$} \label{subsec:Z3_C}

When we forbid conjugacy classes $C_x$ and $C_y$:

\textit{Step 1 ---} Anyons $D,E,F,G,H$ all become forbidden. The set $\{ A,B,C \}$ remains.

\textit{Step 2 ---} The initial forbiddance will induce the condensation of $B$, which is accompanied by the partial forbiddance of $C$ (this is the same process as detailed in Sec.~\ref{subsubsec:DZ3}).

\textit{Step 3 ---} Then, anyon $C$ will split up into two $1$-dimensional anyons, $C_a$ and $C_b$. With the label-correspondence
\bq
A &=& 1 \\
C_a &=& e_1 \\
C_b &=& e_2
\eq
this forms the charge sector of $\mathcal{D}(\mathbb{Z}_{3})$: $\mathbb{Z}_{3}$. For an overview of this process, see Fig.~\ref{fig:Cx_Cy_limit}.

\begin{figure*}
\begin{centering}
\includegraphics[width=\textwidth]{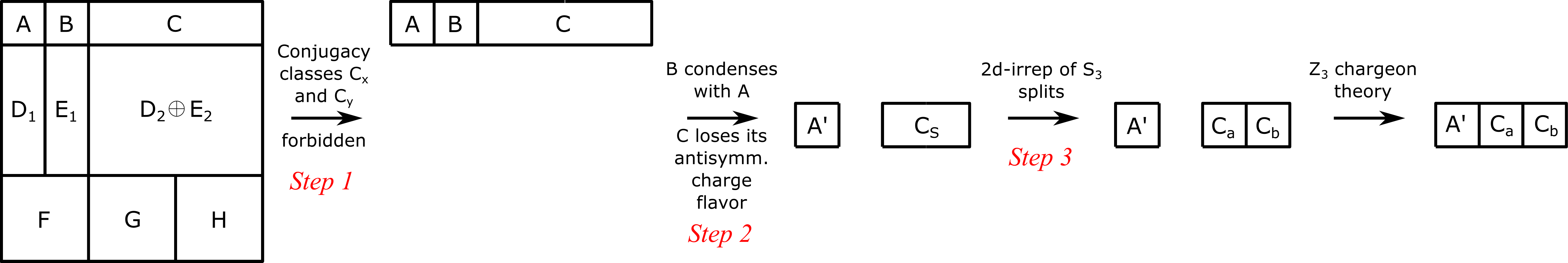}
\caption{Forbidding conjugacy classes $C_x$ and $C_y$ transforms $\mathcal{D}(S_3)$ to $\mathbb{Z}_2$. The subscript $S$ denotes the symmetric subspace of anyon $C$, and subscripts $a$ and $b$ denote the final, $1$-dimensional components. $A'$ is the vacuum after the condensation of $B$. Notice, that if we additionally forbid the irrep $\Gamma_{-1}$ of $S_3$, only \textit{Step 2} changes: instead of the induced condensation of anyon $B$, it is immediately forbidden along with the antisymmetric flavor of $C$. Afterwards, the two processes continue the same way.}
\label{fig:Cx_Cy_limit}
\end{centering}
\end{figure*}

\subsubsection*{Proof}

This proof follows the general ideas of the proof in Sec.~\ref{subsubsec:DZ3}; for more details on certain points please refer to that section.

After the initial forbiddances, the set $\{ A,B,C \}$ remains.

\textit{Step 1 ---} We truncate the fusion rules of $\mathcal{D}(S_3)$ to this set, these are shown in Table~\ref{tab:fusion_ABC}.
\begin{table}
	\centering
		\begin{tabular}{ |c||c|c|c| } 
			\hline
			 & A & B & C \\
			\hline
			\hline
			A & A & B & C \\ 
			\hline
			B & B & A & C \\ 
			\hline
			C & C & C & A $\oplus$ B $\oplus$ C \\
			\hline
		\end{tabular}
		\caption{Fusion rules in the chargeon sector of $\mathcal{D}(S_3)$.}
		\label{tab:fusion_ABC}
\end{table}

Then, using the new fusion rules to construct the $S$-matrix, we get:
\be \label{ABF_Smatrix}
S= \frac{1}{\sqrt{12}}\left[\begin{array}{ccc}
1 & 1 & 2 \\
1 & 1 & 2 \\
2 & 2 & -2
\end{array} \right] .
\ee

Now, we can state the following two lemmas:

\begin{lem}[Condensation (\textit{Step 2})] \label{lem:condensation_Z3}
Anyon $B$ condenses to the vacuum, and the following $S$-matrix describes the emerging theory:
\be \label{eq:Z3_matrix}
\frac{1}{\sqrt{3}}\left[\begin{array}{cc}
1 & \sqrt{2} \\
\sqrt{2} & -1 
\end{array} \right] .
\ee
\end{lem}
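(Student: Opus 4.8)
The plan is to mirror exactly the argument of Lemma~\ref{lem:condensation_DZ3}, since the chargeon sector $\{A,B,C\}$ here plays the same role that $\{A,B,C,F,G,H\}$ did there. The starting point is the reverse-engineered $S$-matrix \eqref{ABF_Smatrix}, and the first observation is that its rows (equivalently, columns) indexed by $A$ and $B$ are identical. Following the protocol of Sec.~\ref{subsec:protocol}, this linear dependence is precisely the signature that $A$ and $B$ have become indistinguishable by braiding, i.e.\ that $B$ condenses to the vacuum. So the first step is simply to point at this degeneracy in \eqref{ABF_Smatrix} and invoke the general criterion already established.

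To extract the $S$-matrix of the condensed theory, I would pass from the basis $\{A,B,C\}$ to the orthonormal basis $\{(A+B)/\sqrt{2},\,(A-B)/\sqrt{2},\,C\}$ by conjugating \eqref{ABF_Smatrix} with the corresponding real, symmetric, orthogonal change-of-basis matrix, in the same spirit as the toric-code/Ising mapping of Ref.~\cite{WLW+08}. Because the $A$ and $B$ rows and columns of \eqref{ABF_Smatrix} coincide, the antisymmetric combination $(A-B)/\sqrt{2}$ acquires an identically vanishing row and column after this rotation, while the symmetric combination $(A+B)/\sqrt{2}$ survives. Dropping the null row and column then leaves a $2\times2$ matrix in the basis $\{(A+B)/\sqrt{2},\,C\}$; a short computation (using $\sqrt{12}=2\sqrt{3}$, so that the surviving entries collapse to $1/\sqrt{3}$ and $\sqrt{2}/\sqrt{3}$) shows this equals the claimed matrix \eqref{eq:Z3_matrix}.

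The only genuinely non-routine step is the justification for discarding the null mode. The vanishing of the $(A-B)/\sqrt{2}$ row and column means this combination braids trivially with every anyon and cannot be assigned a consistent unitary $S$-matrix on its own; retaining it would reproduce the pathology already noted in Lemma~\ref{lem:splitting_DZ3}, where a spurious sector not containing the vacuum must be removed. I would therefore argue, exactly as in Lemma~\ref{lem:condensation_DZ3}, that the physical $S$-matrix of the emergent theory is the surviving symmetric block. The arithmetic of the basis change is elementary; the conceptual content lies entirely in identifying the degeneracy of \eqref{ABF_Smatrix} with condensation and in the licensed truncation to the non-degenerate block, both of which are inherited directly from the $\mathcal{D}(\mathbb{Z}_3)$ analysis of Sec.~\ref{subsubsec:DZ3}.
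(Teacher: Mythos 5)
Your proposal is correct and follows essentially the same route as the paper: the paper's proof of this lemma simply says it ``follows that of Lemma~\ref{lem:condensation_DZ3},'' i.e.\ the basis change $\{A,B,C\}\to\{(A+B)/\sqrt{2},\,(A-B)/\sqrt{2},\,C\}$ applied to \eqref{ABF_Smatrix}, which annihilates the $(A-B)/\sqrt{2}$ row and column and leaves exactly the $2\times2$ block \eqref{eq:Z3_matrix}. Your additional remarks --- reading the identical $A$ and $B$ rows of \eqref{ABF_Smatrix} as the condensation signature, and justifying the truncation of the null mode --- are consistent with the paper's stated protocol and its handling of the analogous step in Sec.~\ref{subsubsec:DZ3}.
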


\begin{proof}
The proof follows that of Lemma~\ref{lem:condensation_DZ3}.
\end{proof}

\begin{lem}[Splitting of anyons (\textit{Step 3})] \label{lem:splitting_Z3}
Merging anyons of $\mathbb{Z}_3$ (chargeons $\{ e_1,e_2 \}$) yields a block-diagonal $S$-matrix, the physical block of which is \eqref{eq:Z3_matrix}.
\end{lem}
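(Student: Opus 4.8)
The plan is to mirror the proof of Lemma~\ref{lem:splitting_DZ3} essentially verbatim, but applied to the three-anyon charge theory $\mathbb{Z}_3$ in place of the full double $\mathcal{D}(\mathbb{Z}_3)$. I would begin by writing the $S$-matrix of the standalone $\mathbb{Z}_3$ anyon model in the basis $\{1,e_1,e_2\}$. Since the chargeons inherit their braiding from the irreps $\Gamma_\omega^{\mathbb{Z}_3}$ and $\Gamma_{\bar{\omega}}^{\mathbb{Z}_3}$, this $S$-matrix is the normalized character table of $\mathbb{Z}_3$,
\be
S_{\mathbb{Z}_3} = \frac{1}{\sqrt{3}} \left[ \begin{array}{ccc} 1 & 1 & 1 \\ 1 & \omega & \bar{\omega} \\ 1 & \bar{\omega} & \omega \end{array} \right] ,
\ee
with $\omega = \exp(2\pi i/3)$ as fixed elsewhere in the paper.

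I would then pass from $\{1,e_1,e_2\}$ to the symmetric/antisymmetric basis $\{ 1, (e_1+e_2)/\sqrt{2}, (e_1-e_2)/\sqrt{2} \}$, which is exactly the merging used in Lemma~\ref{lem:splitting_DZ3}: the symmetric combination plays the role of the prospective split anyon $C$, and the antisymmetric combination is the spurious sector. The heart of the argument is a short computation of the transformed entries. The relation $\omega+\bar{\omega}=-1$ collapses the symmetric--symmetric entry to $-1$, while every entry coupling the antisymmetric combination either to the vacuum or to the symmetric combination vanishes identically (each is a difference $S_{1,e_1}-S_{1,e_2}=0$ or $S_{e_1,e_1}-S_{e_2,e_2}=0$). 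Hence the transformed matrix is block-diagonal, with a $2\times2$ symmetric block carrying the vacuum and a $1\times1$ antisymmetric block.

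Finally, I would discard the antisymmetric block on precisely the grounds used in Lemma~\ref{lem:splitting_DZ3}: it does not contain the vacuum, and adjoining the vacuum to it would force an all-zero first row and column, yielding a pathological, non-unitary $S$-matrix. The surviving symmetric block is identical to Eq.~\eqref{eq:Z3_matrix}, the matrix produced by the condensation of Lemma~\ref{lem:condensation_Z3}. Combining the two lemmas then shows that forbidding $C_x$ and $C_y$ drives $\mathcal{D}(S_3)$ into $\mathbb{Z}_3$, with $C$ splitting into $C_a\equiv e_1$ and $C_b\equiv e_2$.

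I expect the one genuinely delicate point to be the choice of the starting $S$-matrix. Taken literally as a subsector inside the double, the chargeons of $\mathcal{D}(\mathbb{Z}_3)$ carry a degenerate all-ones braiding block, whose merged symmetric entry would be $+2$ rather than $-1$; the statement instead treats $\mathbb{Z}_3$ as a standalone modular theory whose nondegenerate character-table braiding supplies the phases $\omega,\bar{\omega}$. I would make this identification explicit and emphasize that it is exactly the nontrivial relation $\omega+\bar{\omega}=-1$ that reproduces the $-1$ in Eq.~\eqref{eq:Z3_matrix}. Once this is pinned down, the remainder is routine and parallels Lemma~\ref{lem:splitting_DZ3} line for line.
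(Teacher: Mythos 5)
Your proposal is correct and takes essentially the same route as the paper: the paper's proof simply writes down the standalone $\mathbb{Z}_3$ $S$-matrix (the normalized character table, citing Bonderson) and says to follow the proof steps of Lemma~\ref{lem:splitting_DZ3}, which is exactly the symmetric/antisymmetric merging, block-diagonalization, and discarding of the vacuum-free block that you carry out explicitly. Your identification of the standalone modular $\mathbb{Z}_3$ theory (rather than the degenerate chargeon subsector of $\mathcal{D}(\mathbb{Z}_3)$) as the correct starting matrix coincides with the paper's choice, and your explicit use of $\omega+\bar{\omega}=-1$ just fills in the computation the paper leaves implicit.
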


\begin{proof}
The $S$-matrix of $\mathbb{Z}_3$ is \cite{Bonderson07}:
\be
S= \frac{1}{\sqrt{3}}\left[\begin{array}{ccc}
1 & 1 & 1 \\
1 & \omega & \omega^2 \\
1 & \omega^2 & \omega
\end{array} \right] .
\ee
Using this $S$-matrix, we follow the proof steps of Lemma~\ref{lem:splitting_DZ3}.
\end{proof}

Combining Lemmas~\ref{lem:condensation_Z3} and \ref{lem:splitting_Z3} we conclude that $\mathbb{Z}_3$ indeed emerges, after the condensation of $B$ and split-up of anyon $C$. \qed
\vspace{0.5cm}

\subsection{Forbidding $C_x$ and $\Gamma_2$ jointly leads to $\mathbb{Z}_3$}

Forbidding $C_x$ and $\Gamma_2$ results in:

\textit{Step 1 ---} Anyons $C,D,E,G,H$ become forbidden, and the set $\{ A,B,F \}$ remains.

The original $\mathcal{D}(S_3)$ had a $C \leftrightarrow F$ exchange symmetry, thus it is easy to see that the physical processes this transition will follow will be identical to the ones detailed in Sec.~\ref{subsec:Z3_C}.

\textit{Step 2 ---} The condensation of $B$ will follow, $F$ becomes partially forbidden.

\textit{Step 3 ---} Then, $F$ splits up into two $1$-dimensional anyons, which together with the vacuum will form $\mathbb{Z}_{3}$. The resulting $\mathbb{Z}_{3}$ now is the flux sector of $\mathcal{D}(\mathbb{Z}_{3})$, as $F$ is a fluxon.

This process is shown in Fig.~\ref{fig:Cx_Gamma2_limit}.

\begin{figure*}
\begin{centering}
\includegraphics[width=\textwidth]{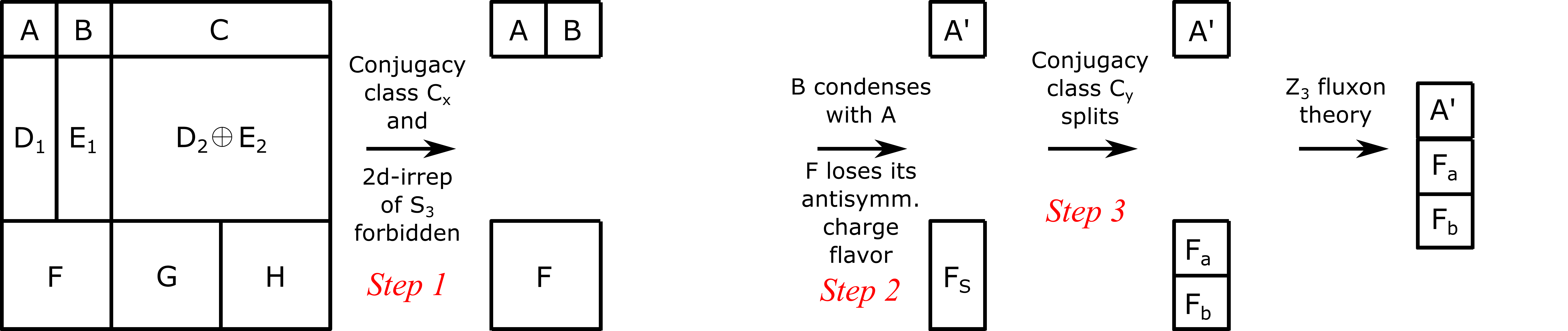}
\caption{Forbidding conjugacy class $C_x$ and irrep $\Gamma_2$ of $S_3$ transforms $\mathcal{D}(S_3)$ to $\mathbb{Z}_2$. The subspace $F_S \equiv F_1$ of Fig.~\ref{fig:anyon-splitting-2}, and subscripts $a$ and $b$ denote the final, $1$-dimensional components. $A'$ is the vacuum after the condensation of $B$. If we additionally forbid the irrep $\Gamma_{-1}$ of $S_3$, only \textit{Step 2} changes: instead of the induced condensation of anyon $B$, it is immediately forbidden along with the nontrivial charge flavor of $F$. After this step, the processes are the same.}
\label{fig:Cx_Gamma2_limit}
\end{centering}
\end{figure*}

\subsubsection*{Proof}

The proof is identical to the one presented in Sec.~\ref{subsec:Z3_C}. \qed

\subsection{Forbidding $\Gamma_{-1}$, $C_x$, $C_y$ jointly leads to $\mathbb{Z}_3$} \label{subsec:Z3_2_C}

Forbidding $\Gamma_{-1}$, $C_x$ and $C_y$ will lead to:

\textit{Step 1 ---} The forbiddance of $B,D,E,F,G,H$, leaving only anyons $\{ A,C \}$.

\textit{Step 2 ---} The forbiddance of $B$ induces the partial forbiddance of $C$ (similar to how condensation of $B$ led to the forbiddance of all anyons' antisymmetric charge flavors, see above, or Sec.~\ref{subsubsec:DZ3}).

\textit{Step 3 ---} Then, $C$ will split up and form the charge sector of $\mathcal{D}(\mathbb{Z}_{3})$.

\subsubsection*{Proof}

The proof of this is straightforward, one only has to dutifully follow the protocol detailed in Sec.~\ref{subsec:protocol}. The $S$-matrix the protocol yields is identical to Eq.~\eqref{eq:Z3_matrix}. For how to model the split-up of anyon $C$, simply consult the proof of Lemma~\ref{lem:splitting_Z3}. \qed

\subsection{Forbidding $\Gamma_{-1}$, $\Gamma_2$, $C_x$ jointly leads to $\mathbb{Z}_3$}

Processes similar to those of Sec.~\ref{subsec:Z3_2_C} happen when we forbid the combination $\Gamma_{-1}$, $\Gamma_2$ and $C_x$:

\textit{Step 1 ---} Anyons $B,C,D,E,G,H$ become forbidden; as well as 

\textit{Step 2 ---} $F$ is partially forbidden (only the charge flavor $\Gamma_1$ remains). This leaves the set $\{ A,F \}$. 

\textit{Step 3 ---} Then, anyon $F$ splits into two particles, forming the flux sector of $\mathcal{D}(\mathbb{Z}_{3})$.

\subsubsection*{Proof}

The proof is identical to the one presented in Sec.~\ref{subsec:Z3_2_C}. \qed

\subsection{Forbidding $C_x$ and $\Gamma_{-1}$ jointly leads to $\cD(\mathbb{Z}_3)$}  \label{subsec:DZ3_2}

By forbidding the conjugacy class $C_x$ and the $\Gamma_{-1}$ irrep at the same time:

\textit{Step 1 ---} Anyons $B,D,E$ are forbidden, the remaining set of anyons is $\{A, C, F, G, H\}$.

\textit{Step 2 ---} The forbiddance of anyon $B$ induces $C,F,G,H$ losing their antisymmetric charge flavors.

\textit{Step 3 ---} Then $C,F,G,H$ will split up into two each, similar to the process detailed in Sec.~\ref{subsubsec:DZ3}.

This process is summarized in Fig.~\ref{fig:Cx_limit} of the main text.

\subsubsection*{Proof}

We need to truncate the fusion rules to the remaining set of anyons $\{ A,C,F,G,H \}$, then construct the $S$-matrix. The matrix this protocol yields is identical to Eq.~\eqref{DZ3_correspondence_mtx}. From there, we can follow the steps detailed in the proof of Sec.~\ref{subsubsec:DZ3}, to prove that all excitations will split up, to form the excitations of $\mathcal{D}(\mathbb{Z}_3)$. \qed

\end{document}